\newcommand{\version}{arxiv}
\newtheorem{definition}{Definition}
\newtheorem{lem}{Lemma}
\newtheorem{prop}{Proposition}
\newtheorem{assumption}{Assumption}
\newtheorem*{proof-non}{Proof}
\newtheorem{comment}{Comment}
\newtheorem{cor}{Corollary}
\def\textAgg{\text{\scriptsize agg}}
\def\eqdef{\ensuremath{:=}}
\def\numBins{\ensuremath{n_{\text{bin}}}}
\def\numLoads{\ensuremath{n}}
\def\numLoadsBins{\ensuremath{n_{\ell}}}
\def\sumOfPSD{\ensuremath{\Sigma_{\bar{P}}}}
\def\sumOfPSDAgg{\ensuremath{ \Sigma^{\textAgg} }}
\def\sumOfPSDEll{\ensuremath{\Sigma_{\bar{P}^\ell}}}
\def\netDem{\text{\scriptsize ND}}
\def\loadDem{\text{\scriptsize BA}}
\def\PSD{\text{SD}}
\def\COP{\text{\tiny COP}}
\def\INT{\text{\scriptsize int}}
\def\SOC{\text{\scriptsize SoC}}
\def\textSmall{\text{\scriptsize Small}}
\def\textLarge{\text{\scriptsize Large}}
\def\textLow{\text{\scriptsize Low}}
\def\textHigh{\text{\scriptsize High}}
\newlength{\noteWidth}
\long\def\notes#1{\ifinner
           {\footnotesize #1}
           \else
           \marginpar{\parbox[t]{\noteWidth}{\raggedright\footnotesize #1}}
       \fi\typeout{#1}}
\def\notes#1{\typeout{read notes: #1}}  
\def\pb#1{\notes{pb: {\color{red}{#1}}      }}
\NewDocumentCommand{\setupcollaborator}{mm}
{
	\prop_new:c { g_collaborator_#1_prop }
	\pb_prop_gset_bykeys:cn { g_collaborator_#1_prop } { #2 }
}
\NewDocumentCommand{\selectcollaborator}{m}
{
	\prop_map_inline:cn { g_collaborator_#1_prop }
	{
		\tl_set:cn { ##1 } { ##2 }
	}
}
\begin{document}
\bstctlcite{IEEEexample:BSTcontrol} 
\title{\vspace{6.4mm} Characterizing capacity of flexible loads for providing grid support}
\author{
	\IEEEauthorblockN{Austin R. Coffman\IEEEauthorrefmark{1}, Zhong Guo\IEEEauthorrefmark{1}$^,$\IEEEauthorrefmark{2}, and Prabir Barooah\IEEEauthorrefmark{1}}  \vspace{-0.7cm}
	
	\thanks{\IEEEauthorrefmark{1} University of Florida}
	\thanks{\IEEEauthorrefmark{2} corresponding author, email: zhong.guo@ufl.edu.}
	\thanks{ARC, ZG, and PB are with the Dept. of Mechanical and Aerospace Engineering, University of Florida, Gainesville, FL 32601, USA. The research reported here has been partially supported by the NSF through award 1646229 (CPS-ECCS).}
}

\maketitle
\thispagestyle{empty}
\begin{abstract}
Flexible loads are a resource for the Balancing Authority (BA) of the future to aid in the balance of supply and demand in the power grid. Consequently, it is of interest for a BA to know how much flexibility a collection of loads has, so to successfully incorporate flexible loads into grid level resource allocation. Loads' flexibility is limited by all their Quality of Service (QoS) requirements.  In this work we present a characterization of capacity for a collection of flexible loads. This characterization is in terms of the Power Spectral Density (PSD) of the  reference signal. Two advantages of our characterization are: (i) it easily allows for a BA to use the characterization for resource allocation of flexible loads and (ii) it allows for precise definitions of the power and energy capacity for a collection of flexible loads.
\end{abstract}
\section{Introduction}
The inherent variability in renewable generation sources such as solar and wind is a challenge for the power grid operators to balance demand and generation. Ramp rate constraints prevent conventional generation from handling this mismatch between demand and generation completely, while grid level storage from batteries is expensive. Thus a new resource is being investigated to help fill the mismatch: flexible loads. Flexible loads have the ability to vary power consumption over a baseline level without violating their Quality of Service (QoS). The baseline power consumption is the power consumed without grid interference. The requested amount from the grid authority, to deviate from baseline, is the \emph{reference signal}. The tracking of a zero-mean reference signal guises, in the eyes of the grid operator, flexible loads as batteries providing storage services. This battery-like behavior of flexible loads is often referred as Virtual Energy Storage (VES)~\cite{bar:2019}. VES  from flexible loads can be less expensive than energy storage from batteries~\cite{cammardella2018energy}. Some examples of flexible loads include residential air conditioners~\cite{CoffmanStudyHPB:2018}, water heaters~\cite{liu2019trajectory}, refrigerators, commercial HVAC systems~\cite{haokowlinbarmey:2013}, pumps for irrigation~\cite{AghajFarm:2019}, pool cleaning~\cite{chebusmey17a} or heating~\cite{LeeGridJESBC:2020}.

If the grid operator expects the flexible loads to track the reference signal accurately, then the reference must not cause the loads to violate their QoS. From the viewpoint of the grid operator, flexible loads not tracking a reference makes them appear unreliable. From the viewpoint of the load, reference signals that continually require QoS violation provide incentive for loads to stop providing VES. In either case, avoidance of the above scenarios is paramount to the long term success of VES. That is, reference signals must be designed to respect the \emph{capacity} of the collection of flexible loads.

Informally, the capacity represents limitations in ensemble behavior (e.g., the ability to track a reference signal) due to QoS requirements at the individual load. Consequently, a key step in determining the capacity is relating the QoS requirements to requirements for the reference signal. 
Unfortunately, this is not a straight forward task and many varying approaches are present in the current literature~\cite{hao_aggregate:2015,haowuliayan:2017,coffman2019aggregate:arxivACC,buildFlex:yin:2016,HaoKalsi:CDC:15,HughesPoolla:HICCS:15}. The most popular approach is to develop ensemble level necessary conditions~\cite{coffman2019aggregate:arxivACC,hao_aggregate:2015}; reference signals that satisfy these conditions ensure the ability of all loads in the collection to satisfy QoS while tracking the reference. Other approaches include geometry based characterizations~\cite{KunduKalsi:PSCC:18} and  characterizations through distributed optimization~\cite{Lin_ACC2018}\pb{What is a ``load centric'' characterization? And how is it different from ours?}. 


A limitation of the currently available capacity results is that a balancing authority requires a specific reference signal to perform resource allocation; only post-facto checking if the BA's needs are within the capacity of the resource is possible. That is, these capacity characterizations cannot be used for long-term planning. The BA's need are exogenous, so a useful capacity characterization should allow the BA to determine the portion of its needs that can be provided by a specific class and number of loads ahead of time.
        


Contrarily, if one develops constraints on the statistics of the reference signal, then a notion of capacity that is also useful for planning can be developed. For instance, consider Figure~\ref{fig:Resource_Power} where the Power Spectral Density (PSD) of a grid's net demand is allocated to resources (precise definition of PSDs is provided later). This frequency based allocation does not require knowledge of a specific reference signal, but only of the statistics of the reference signal. To avoid possible confusion with electrical power (in Watt) we use Spectral Density (\PSD) instead of Power Spectral Density (PSD) in the rest of the paper.    


In fact, it was introduced in~\cite{barbusmey:2015} that the QoS requirements of flexible loads can be characterized as constraints in the frequency domain. That is, while primarily an illustrative example, it is possible to quantitatively develop the regions shown in Figure~\ref{fig:Resource_Power}. 


\begin{figure} [t]
	\centering
	\includegraphics[width=1\columnwidth]{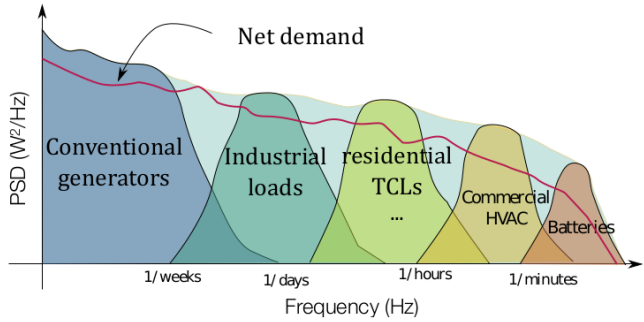}
	\caption{An example spectral allocation of resources to meet the grids needs.}
	\label{fig:Resource_Power}
\end{figure}

In addition to~\cite{barbusmey:2015,coffmanSpectral_ACC:2020}, there are many works that advocate for the specification of loads' abilities~\cite{krieger:Thesis:2013,haokowlinbarmey:2013} and for resource allocation spectrally~\cite{APT:2007}. The results of real world VES experiments also suggest that specifying the spectral content of a reference signal is a feasible way to encapsulate the limitations in flexibility of a load~\cite{linbarmeymid:2015}. A simplification of this concept is also widely used in today's power grid; ancillary services are classified by their response times and ramp rates~\cite{kirby2007ancillary}.


Motivated by the advantages of working in the frequency domain we extend the work in~\cite{barbusmey:2015,coffmanSpectral_ACC:2020} and characterize the capacity through the \PSD\ of the reference signal, for an entire collection of flexible loads, based on the QoS requirements of each flexible load. The QoS constraints considered here are quite general and encapsulate operating constraints for: (i) Commercial HVAC systems, (ii) batteries, and (iii) Thermostatically Controlled Loads (TCLs). The contribution over past work is threefold: (i) we characterize capacity through constraints on the statistics of the reference signal, rather than the reference signal itself, (ii) our characterization of capacity allows for a BA to easily perform resource allocation and (iii) our characterization allows for precise definitions of the power and energy capacity for a collection of flexible loads.

We corroborate the advantages of our capacity characterization through numerical experiments. A convex optimization problem is setup that `projects' the needs of the grid authority onto the set of feasible \PSD's. The needs of the grid is quantified as the \PSD\ of net-demand, as seen in Figure~\ref{fig:Resource_Power}. In each simulation scenario, we determine the power and energy capacity of a collection of flexible loads based on the obtained \PSD. 

The capacity characterization presented is for continuously varying loads, that is loads that can vary power consumption freely within an interval. The capacity characterization can be extended to handle discrete loads (e.g., TCLs), by applying it to the work~\cite{coffman2019aggregate:arxivACC}. 

A preliminary version of this work is published in~\cite{coffmanSpectral_ACC:2020}, which dealt with a single load. In this work we extend the results to a collection of loads.

The paper proceeds as follows, Section~\ref{sec:indLoadModel} describes the QoS constraints of loads, Section~\ref{sec:indvidLoad} describes mathematical prerequisites and a characterization of individual load capacity, Section~\ref{sec:aggCap} describe the spectral characterization of the constraints for the ensemble. Section~\ref{sec:psdFeas} describes the method for determining how much of a grid's needs can be satisfied by flexible loads. Numerical results are provided in Section~\ref{sec:numExp}.

\section{QoS Constraints of Individuals} \label{sec:indLoadModel}
Denote by $P(t)$ the power consumption of a VES system at time $t$, and let $P^b(t)$ its  baseline demand. The demand deviation is $\tilde{P} \eqdef P(t) - P^b(t)$, and for the load to provide VES service by controlling the deviation $\tilde{P}(t)$ to track some reference signal. The first constraint is simply an actuator constraint:
\begin{align} \label{eq:QoSOne}
&\text{QoS-1:} \quad \left|\tilde{P}(t)\right| \leq c_1, \quad \forall \ t,
\end{align}
where the constant $c_1$, the maximum possible deviation of power consumption, depends on the rated power and the baseline demand. Second, define the demand increment $\tilde{P}_\delta(t) \eqdef \tilde{P}(t)-\tilde{P}(t-\delta)$, where $\delta>0$ is a predetermined (small) time interval. The second constraint is a ramping rate constraint:
\begin{align}
  \label{eq:QoS-increment}
  \text{QoS-2:} \quad \left|\tilde{P_\delta}(t)\right| \leq c_2, \quad &\forall \ t.
\end{align}
Third, define the additional energy use during any time interval of length $T$:
\begin{align} \label{eq:engDevDyn}
	\tilde{E}(t) = \int_{t-T}^{t}\tilde{P}(\sigma)d\sigma.
\end{align}
Since~\eqref{eq:engDevDyn} is a linear system it has representation with Laplace variable: $\tilde{E}(s) = G(s)\tilde{P}(s)$ for some BIBO stable transfer function $G$.
The third QoS constraint is that
\begin{align}\label{eq:QoS_energy}
\text{QoS-3:} \quad \left|\tilde{E}(t)\right| \leq c_3, \quad &\forall \ t.
\end{align}
The parameter $T$ in~\eqref{eq:engDevDyn} can represent the length of a billing period. Ensuring~\eqref{eq:QoS_energy} ensures that the energy consumed during a billing period close to the nominal energy consumed, although it is stronger than what is necessary. 

To define the fourth and last QoS constraint, we associate with the VES system a \emph{storage variable} $\theta$ that is related to the demand deviation through a stable linear time invariant system $H(s)$ as follows:
\begin{align} \label{eq:storage-LTI}
\tilde{\theta}(s) = H(s)\tilde{P}(s),
\end{align}
and impose the constraint
\begin{align}\label{eq:QoS_Thermenergy}
\text{QoS-4:} \quad \left|\tilde{\theta}(t)\right| \leq c_4, \quad &\forall \ t.
\end{align}
To understand the storage variable, imagine the VES system is a flexible HVAC system. A simple model of the indoor temperature of a building this HVAC system serves is:
\begin{align} \label{eq:thermDyn}
C\dot{\theta}(t) = \frac{1}{R}\left(\theta_0(t) - \theta(t)\right) + \dot{q}_{\INT}(t) + \eta_{\COP}P(t), 
\end{align}
where $C$ and $R$ are thermal capacitance and resistance, $\theta_0(t)$ is the ambient temperature, $\eta_{\COP}$ is the coefficient of performance, and $\dot{q}_{\INT}(t)$ is an internal disturbance. 

In general, the baseline power consumption for a HVAC system is the value $P^b(t)$ that keeps the internal temperature of the load at a fixed value $\bar{\theta}$, which for~\eqref{eq:thermDyn} is
\begin{align}
	P^b(t) = -\frac{\left(\theta_0(t)-\bar{\theta}\right)}{\eta_{\COP} R} - \frac{\dot{q}_{\INT}(t)}{\eta_{\COP}}.
\end{align} 
Since we are concerned with the flexibility in the load, we linearize~\eqref{eq:thermDyn} about the thermal setpoint $\bar{\theta}$ and the baseline power $P^b(t)$ yielding,
\begin{align}  \label{eq:devModel}
	\dot{\tilde{\theta}}(t) = -\gamma\tilde{\theta}(t) + \beta\tilde{P}(t), \quad \gamma = \frac{1}{RC}, \quad \beta = \frac{\eta_{\COP}}{C},
\end{align} 
where $\tilde{\theta}\triangleq \theta(t) - \bar{\theta}$ is the internal temperature deviation and $\tilde{P}$ is as defined at the beginning of this Section.

Taking the Laplace transform of~\eqref{eq:devModel}, we get $\tilde{\theta}(s) = \frac{\beta}{s+\gamma}\tilde{P}(s)$. So, for the HVAC system example, the storage variable is simply the indoor temperature deviation from the baseline value, and $H(s) = \frac{\beta}{s+\gamma}$. A more complex model of HVAC dynamics would lead to a higher order $H(s)$. The model~\eqref{eq:thermDyn}, while simplistic, has been shown to agree quite well with more realistic models for certain flexible loads~\cite{HUANG201958}.

Also, when the VES system is in fact a battery, $\tilde{\theta}(t)$ can be thought of a the amout of energy stored in the battery at time $t$, i.e., $\tilde{\theta} = E_0x_{\SOC}(t)$, where $E_0$ is the nominal energy capacity in kWh and $x_{\SOC}$ is the state of charge. A simple dynamic model of this variable is $\dot{\tilde{\theta}}(t) = -\alpha \tilde{\theta}(t)+\tilde{P}(t)$, where  $-\alpha \tilde{\theta}(t)$ accounts for the leakage, self degradation, and non-unity round trip efficiency of the battery. In this case $H(s) = \frac{1}{s+\alpha}$

The four constraints QoS 1-4, with parameters $Q^i \triangleq \left(\{c_i\}_{i=1}^4, T, \delta\right)^i$ specify the QoS set for the VES system.  The concern now is, what is a feasible power deviation signal $\tilde{P}(t)$?   

\section{Individual Load Capacity} \label{sec:indvidLoad}

\subsection{Stochastic Setting}
To develop our capacity characterization, we switch from a deterministic to a stochastic setting. In this setting we model the power deviation, $\tilde{P}$ as a continuous time stochastic process. The mean and autocorrelation function for $\tilde{P}$ are,
\begin{align}
	\mu_{\tilde{P}}(t) &\triangleq \mathbf{E}[\tilde{P}(t)], \quad \forall \ t, \\ 
	R_{\tilde{P}}(s,t) &\triangleq \mathbf{E}[\tilde{P}(s)\tilde{P}(t)], \ \forall \ s,t,
\end{align}
where $\mathbf{E}[\cdot]$ denotes mathematical expectation. We make the following assumption about the stochastic process $\tilde{P}$.
\begin{assumption} \label{ass:WSSprocess}
	The stochastic process $\tilde{P}$ is wide sense stationary (WSS) with mean function $\mu_{\tilde{P}} = 0$ for all $t$.
\end{assumption}

Since $\tilde{P}$ is expressed as the difference of the power consumption from a base value it is intuitive to set the expectation of this to zero. Otherwise, loads are not providing storage services. Furthermore, WSS requires the variance and mean of the process $\tilde{P}$ to be time invariant and the autocorrelation function to be a function of $\tau = s-t$. We denote the time invariant variance as $\sigma^2_{\tilde{P}}$; the time invariant mean is already specified in the assumption.

In addition, for a continuous time WSS stochastic process $\{X(t)\}$ we have, through the Fourier transform, an alternative expression for the autocorrelation function~\cite{hajek2015random},
\begin{align}\label{eq:WKT}
&R_{X}(\tau) = \int_{-\infty}^{\infty} S_{X}(\omega)  e^{(j\omega\tau)}d\frac{\omega}{2\pi}, \ \text{and} \\ \label{eq:WKTTwo}
&S_{X}(\omega) = \int_{-\infty}^{\infty} R_{X}(\tau)  e^{(-j\omega\tau)}d\tau,
\end{align}
where $S_{X}(\omega)$ is the (power) Spectral Density (\PSD) of $X$, $\omega \in \mathbb{R}$ is the frequency variable, and $j$ is the imaginary unit. The above is based on the general definition for (power) \PSD\ of the signal $X$,
\begin{align} \label{eq:altDefPSD}
	S_{X}(\omega) \triangleq \lim_{T \to \infty}\frac{1}{T}\mathbf{E}\bigg[\bigg|\int_{0}^{T}X(t)e^{-j\omega t}\bigg|^2\bigg]
\end{align}
 the equivalence of definitions~\eqref{eq:altDefPSD} and~\eqref{eq:WKTTwo} for a WSS process is the Wiener-Khinchin theorem. Letting $\tau = 0$ in~\eqref{eq:WKT} results in,
\begin{align}\label{eq:PSD_VAR}
R_{X}(0) = \int_{-\infty}^{\infty} S_{X}(\omega) d\frac{\omega}{2\pi} ,
\end{align}
where, if the mean of $X$ is zero, $R_X(0)$ is the variance of the process $X$. We introduce the following proposition from~\cite{hajek2015random} that will be useful in the developments to follow.

\begin{prop} \label{prop:hajekTwo}
	\cite{hajek2015random} Let X be a WSS stochastic process and input to the linear time invariant BIBO stable system $G(s)$ with output $Y$, then $Y$ is WSS, $X$ and $Y$ are jointly WSS, and
		\begin{align} \nonumber
		\text{(i)} \quad &\mathbf{E}[Y] = G\left(j\omega\right)\Big|_{\omega=0}\mathbf{E}[X], \\ \nonumber
		\text{(ii)} \quad &S_Y(\omega) = \left|G(j\omega)\right|^2S_X(\omega),
		\end{align}
		where $S_X$ is the \PSD\ of $X$, $S_Y$ is the \PSD\ of $Y$, and $G(j\omega)$ is the frequency response of $G(s)$.
\end{prop}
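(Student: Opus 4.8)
The plan is to realize the input--output map in the time domain as a convolution and then transport everything to the frequency domain via the Fourier transform. Because $G$ is BIBO stable, its impulse response $g$ (the inverse Laplace transform of $G$) is absolutely integrable, $\int_{-\infty}^{\infty}|g(u)|\,du<\infty$, and under Assumption~\ref{ass:WSSprocess} the output has the mean-square convergent representation $Y(t)=\int_{-\infty}^{\infty} g(t-\sigma)X(\sigma)\,d\sigma$, the convergence following from $g\in L^1$ together with $\sup_t\mathbf{E}[X(t)^2]=R_X(0)<\infty$. First I would settle part (i): taking expectations and interchanging expectation with integration (Fubini, valid since $g\in L^1$ and $|\mathbf{E}[X]|<\infty$) gives $\mathbf{E}[Y(t)]=\int_{-\infty}^{\infty}g(t-\sigma)\mathbf{E}[X(\sigma)]\,d\sigma=\mathbf{E}[X]\int_{-\infty}^{\infty}g(u)\,du=G(0)\,\mathbf{E}[X]$, which is constant in $t$ and equals $G(j\omega)\big|_{\omega=0}\mathbf{E}[X]$.

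Next I would compute the second-order statistics directly from the convolution. For the cross-correlation, $\mathbf{E}[X(t+\tau)Y(t)]=\int g(t-\sigma)R_X(t+\tau-\sigma)\,d\sigma=\int g(u)R_X(\tau+u)\,du$, a function of $\tau$ alone, so $X$ and $Y$ are jointly WSS. Expanding both convolutions in $\mathbf{E}[Y(t+\tau)Y(t)]$ and applying Fubini once more, $\mathbf{E}[Y(t+\tau)Y(t)]=\iint g(u_1)g(u_2)R_X(\tau-u_1+u_2)\,du_1\,du_2$, which again depends only on $\tau$; hence $Y$ is WSS and this double integral equals $R_Y(\tau)$.

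For part (ii) I would feed $R_Y$ into the \PSD\ definition~\eqref{eq:WKTTwo}, $S_Y(\omega)=\int R_Y(\tau)e^{-j\omega\tau}\,d\tau$. Substituting the double-integral form of $R_Y$, swapping the order of integration, and making the change of variable $\tau'=\tau-u_1+u_2$ in the inner ($\tau$) integral factors the expression as $\big(\int g(u_1)e^{-j\omega u_1}\,du_1\big)\big(\int g(u_2)e^{+j\omega u_2}\,du_2\big)\big(\int R_X(\tau')e^{-j\omega\tau'}\,d\tau'\big)$. The first factor is $G(j\omega)$, the third is $S_X(\omega)$, and the second is $\overline{G(j\omega)}$ since $g$ is real-valued; multiplying yields $S_Y(\omega)=G(j\omega)\,\overline{G(j\omega)}\,S_X(\omega)=|G(j\omega)|^2S_X(\omega)$.

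The hard part here is analytic rather than algebraic: rigorously defining the mean-square convolution and justifying the repeated interchanges of expectation with integration and, especially, the swap of the Fourier integral with the double convolution integral in part (ii). These rest on absolute integrability --- of $g$, which is exactly BIBO stability, and (for the last swap) of $R_X$, i.e. $R_X\in L^1$, which also guarantees that $S_X$ exists as an ordinary function; without the latter the argument should be recast in terms of spectral measures. The boundedness $|R_X(\tau)|\le R_X(0)<\infty$ coming from wide-sense stationarity is what makes the mean-square convolution and the expectation interchanges go through. Since the statement is quoted verbatim from~\cite{hajek2015random}, one could also simply cite it; the sketch above indicates the route a self-contained proof would take.
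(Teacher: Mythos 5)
The paper offers no proof of this proposition; it is imported verbatim from the textbook \cite{hajek2015random}, so there is nothing internal to compare against. Your self-contained argument --- realizing $Y$ as the mean-square convolution $Y(t)=\int g(t-\sigma)X(\sigma)\,d\sigma$, computing the mean, cross-correlation and autocorrelation to get joint wide-sense stationarity, and then factoring the Fourier transform of $R_Y$ into $G(j\omega)\,\overline{G(j\omega)}\,S_X(\omega)$ --- is the standard proof of the WSS filtering theorem (essentially the one in \cite{hajek2015random}) and is correct, including your accurate caveat that the final interchange requires $R_X\in L^1$ (so that $S_X$ exists as a function) or, failing that, a reformulation in terms of spectral measures, while BIBO stability supplies $g\in L^1$ for all the other interchanges.
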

Furthermore, the Chebyshev inequality for a random variable $X$, will be useful:
\begin{align} \label{eq:chebIneq}
\mathcal{P}\big(\left|X - \mu_{\tiny X}\right| \geq k\big) \leq \frac{\sigma^2_X}{k^2}, \quad \forall \ k>0,
\end{align}
where $\mathcal{P}(\cdot)$ denotes probability.

\ifx 0
\begin{comment}[The WSS assumption]
The assumption that $\tilde{P}$ is WSS may appear arbitrary, and limiting in application of flexible demand in grid support. There non-stationary variations in the net demand due to diurnal, weekly, and seasonal trends and demand flexibility is needed to help reduce the net demand. However, since most flexible loads only have flexibility in the time scale of few hours, we expect the deviation $\tilde{P}(t)$ requested to be at faster time scale. The lower frequency components of the net demand will be met by traditional genertors. The remaining component, which flexible loads are expected to track, does not have these non-stationary trends. 
\end{comment}
\fi

\subsection{Inequality Constraints: Spectral Characterization} \label{sec:ineqCons}
The QoS constraints QoS~1-4 are characterized probabilistically in the following way. The inequalities in QoS~1-4 turn into probabilistic inequalities; the probability of the QoS constraint \textit{not} being met is required to be small:
\begin{align} 
\label{eq:con_P_k}
&\mathcal{P}\left(\left|{\tilde{P}(t)}\right|\geq c_1\right) \leq \epsilon_1,  &\forall \ t,\\
\label{eq:con_P_k-1}
&\mathcal{P}\left(\left|\tilde{P}_\delta(t)\right|\geq c_2\right) \leq \epsilon_2,  &\forall \ t,\\
\label{eq:con_E_k}
&\mathcal{P}\left(\left|\tilde{E}(t)\right|\geq c_3\right) \leq \epsilon_3,  &\forall \ t,\\
\label{eq:con_T_k}
&\mathcal{P}\left(\left|\tilde{\theta}(t)\right|\geq c_4\right) \leq \epsilon_4,  &\forall \ t,
\end{align}
The quantities $\{\epsilon_i\}_{i=1}^4$ set the tolerance level for satisfying the respective constraint and are chosen to be small. 

In order to pose the inequality constraints~\eqref{eq:con_P_k}-\eqref{eq:con_T_k} in terms of $S_{\tilde{P}}$, two steps are taken. The first step is to utilize the Chebyshev inequality~\eqref{eq:chebIneq} to bound the probabilities in~\eqref{eq:con_P_k}-\eqref{eq:con_T_k} as a function of the variance of the given random variable. The second step is to then use the Wiener-Khinchin theorem~\eqref{eq:WKT} to express the variance as the integral of $S_{\tilde{P}}$.

\begin{lem} \label{lem:zeroMean}
	Let $\tilde{P}$ satisfy Assumption~\ref{ass:WSSprocess}, then for all $t$,
	\begin{align} \nonumber
	\mathbf{E}[\tilde{E}(t)] = 0, \ \mathbf{E}[\tilde{P}_{\delta}(t)] = 0, \ \text{and} \ \mathbf{E}[\tilde{\theta}(t)] = 0.
	\end{align}  
\end{lem}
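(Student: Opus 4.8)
The plan is to verify each of the three zero-mean claims by expressing the corresponding random variable as a linear functional (or LTI-filtered version) of $\tilde{P}$ and then invoking linearity of expectation together with Assumption~\ref{ass:WSSprocess}, which guarantees $\mathbf{E}[\tilde{P}(t)] = 0$ for all $t$.

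First I would handle $\tilde{P}_\delta(t) = \tilde{P}(t) - \tilde{P}(t-\delta)$. This is the easiest case: by linearity of expectation, $\mathbf{E}[\tilde{P}_\delta(t)] = \mathbf{E}[\tilde{P}(t)] - \mathbf{E}[\tilde{P}(t-\delta)] = 0 - 0 = 0$. Next, for $\tilde{E}(t) = \int_{t-T}^{t} \tilde{P}(\sigma)\,d\sigma$, I would pass the expectation through the integral (Fubini/Tonelli, justified since $\tilde{P}$ is WSS with finite second moment, hence integrable on the bounded interval $[t-T,t]$) to get $\mathbf{E}[\tilde{E}(t)] = \int_{t-T}^{t} \mathbf{E}[\tilde{P}(\sigma)]\,d\sigma = 0$. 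Alternatively, since $\tilde{E}(s) = G(s)\tilde{P}(s)$ with $G$ BIBO stable, one can apply Proposition~\ref{prop:hajekTwo}(i): $\mathbf{E}[\tilde{E}] = G(0)\,\mathbf{E}[\tilde{P}] = 0$.

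Finally, for $\tilde{\theta}$, which is defined through the stable LTI system $\tilde{\theta}(s) = H(s)\tilde{P}(s)$, I would directly apply Proposition~\ref{prop:hajekTwo}(i) with $G = H$: since $\tilde{P}$ is WSS (Assumption~\ref{ass:WSSprocess}) and input to the BIBO stable system $H(s)$, the output $\tilde{\theta}$ is WSS and $\mathbf{E}[\tilde{\theta}] = H(j\omega)\big|_{\omega=0}\,\mathbf{E}[\tilde{P}] = H(0)\cdot 0 = 0$. This also confirms $\tilde{\theta}$ is WSS, which is implicitly needed later when applying the Chebyshev/Wiener–Khinchin machinery to QoS-4.

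The main obstacle, such as it is, is largely bookkeeping rather than a genuine difficulty: one must make sure the hypotheses of Proposition~\ref{prop:hajekTwo} are actually met — i.e., that $G$ in~\eqref{eq:engDevDyn} and $H$ in~\eqref{eq:storage-LTI} are indeed BIBO stable (the paper asserts this) and that $\tilde{P}$ being WSS with zero mean is exactly what Assumption~\ref{ass:WSSprocess} provides. A minor subtlety is the interchange of expectation and integration for $\tilde{E}$ if one proves that case by hand instead of via the transfer-function route; this is routine given the finite second moment of a WSS process over a compact interval, so I would simply cite it. Overall the lemma is a short consequence of linearity and the already-quoted proposition.
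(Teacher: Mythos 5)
Your proposal is correct and follows essentially the same route as the paper: linearity of expectation for $\tilde{P}_\delta$, and Proposition~\ref{prop:hajekTwo}-(i) applied to the BIBO stable systems $G(s)$ and $H(s)$ for $\tilde{E}$ and $\tilde{\theta}$. The extra Fubini-based alternative for $\tilde{E}$ is a harmless addition but not needed.
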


\begin{proof}
	Apply the result of Proposition~\ref{prop:hajekTwo}-(i) for $\mathbf{E}[\tilde{E}(t)]$ and $\mathbf{E}[\tilde{\theta}(t)]$. The linearity of expectation suffices for $\mathbf{E}[\tilde{P}_{\delta}(t)]$.
\end{proof}
With the result in Lemma~\ref{lem:zeroMean} and Chebyshev's inequality~\eqref{eq:chebIneq} we formulate sufficient conditions for the inequality constraints~\eqref{eq:con_P_k}-\eqref{eq:con_E_k} as follows,
\begin{align} \label{eq:chebRowOne}
&\sigma_{\tilde{P}}^2 \leq {c_1^2}\epsilon_1,  &\sigma^2_{\tilde{P}_\delta} \leq {c_2^2}\epsilon_2,\\ \label{eq:chebRowTwo}
&\sigma^2_{\tilde{E}} \leq {c_3^2}\epsilon_3 
&\sigma^4_{\tilde{\theta}} \leq {c_4^2}\epsilon_4,
\end{align}
so that the probability of exceeding the inequality constraints~\eqref{eq:con_P_k}-\eqref{eq:con_T_k} will be less than the respective specified amount, $\{\epsilon_i\}_{i=1}^4$.
The variance $\sigma^2_{\tilde{P}_\delta}$ is equivalently, 
\begin{align} 
&\sigma^2_{\tilde{P}_{\delta}} = \mathbf{E}\bigg[\left(\tilde{P}_\delta(t)\right)^2\bigg] =  2\left(R_{\tilde{P}}(0)-R_{\tilde{P}}(\delta)\right),  \label{eq:pDeltaVar} 
\end{align}

Now applying the Wiener-Khinchin theorem~\eqref{eq:WKT} to the LHS of~\eqref{eq:chebRowOne}-\eqref{eq:chebRowTwo} we have,
\begin{align} 
&\int_{0}^{\infty} S_{\tilde{P}}(\omega) d\frac{\omega}{\pi} \leq {c_1^2}\epsilon_1, \label{eq:intPowerDev}\\ 
%
&\int_{0}^{\infty} S_{\tilde{P}}(\omega) \left(2-2\cos{(\omega\delta)}\right)d\frac{\omega}{\pi}  \leq \epsilon_2 c_2^2,\label{eq:invRateCon}\\
%
&\int_{0}^{\infty} S_{\tilde{E}}(\omega)d\omega =\int_{0}^{\infty} |G(j\omega)|^2S_{\tilde{P}}(\omega) d\frac{\omega}{\pi}\leq {c_4^2}\epsilon_4, \label{eq:intEngDev} \\
%
&\int_{0}^{\infty} S_{\tilde{\theta}}(\omega) d\frac{\omega}{\pi} = \int_{0}^{\infty} |H(j\omega)|^2 S_{\tilde{P}}(\omega) d\frac{\omega}{\pi} \leq {c_3^2}\epsilon_3 \label{eq:intTempDev}.
\end{align}
\begin{definition}(Individual load constraint set) \label{def:indConDef}
	The set of feasible \PSD s is,
	\begin{align} \nonumber
		\mathcal{S} = \{S_{\tilde{P}} \ : \ S_{\tilde{P}} \geq 0, \ \text{and} \ \eqref{eq:intPowerDev}-\eqref{eq:intEngDev}\}.
	\end{align}
\end{definition}
That is, if $\tilde{P}$ has \PSD\ in the set $\mathcal{S}$, then the loads power deviation can match $\tilde{P}$ while satisfying the constraints~\eqref{eq:con_P_k}-\eqref{eq:con_T_k}.
																
\subsection{Power and energy capacity}
Utilizing the concept of \PSD, the definition of power and energy capacity of a flexible load can be characterized in terms of the \PSD\ of the power deviation.
\begin{definition} \label{def:capDef}
	Let $\epsilon \in (0,1]$ and $\mathcal{S}_{\epsilon}$ be a set of \PSD s, then the power and energy capacity for a given \PSD\ $S\in\mathcal{S}_{\epsilon}$ are,
	\begin{align}\nonumber
		\textbf{Pow}(S) &= \sqrt{\frac{1}{\pi\epsilon}\int_{0}^{\infty}S(\omega)d\omega} \quad (W), \\ \nonumber
		\textbf{Eng}(S) &=\sqrt{\frac{1}{\pi\epsilon}\int_{0}^{\infty}\left|G(j\omega)\right|^2S(\omega)d\omega} \quad (Wh),
	\end{align}
\end{definition}
The reason for these definitions is that  for a signal $\tilde{P}(t)$ with \PSD\ $S$, the following inequalities follow from these definitions upon applying~\eqref{eq:engDevDyn}:
\begin{align}
	\mathcal{P}\left(\left|\tilde{P}(t)\right| \geq \textbf{Pow}(S)\right) \leq \epsilon, \forall t\\
	\mathcal{P}\left(\left|\tilde{E}(t)\right| \geq \textbf{Eng}(S)\right) \leq \epsilon, \forall t.
\end{align}
The set $\mathcal{S}_{\epsilon}$ is given explicit dependence on $\epsilon$ to reflect the dependence of constraints~\eqref{eq:intPowerDev}-\eqref{eq:intEngDev} on $\epsilon$. 

\ifshowArxivAlt
This is an application of the Chebyshev inequality~\eqref{eq:chebIneq} with $k = \textbf{Pow}(S)$ ($\textbf{Eng}(S)$). We offer numerical experiments to further elucidate our capacity definitions in the extended version of this manuscript found on arxiv~\cite{bibid}.
\fi

\ifshowArxiv
For clarity, we estimate these probabilities for a given $\epsilon$ and \PSD\ $S \in \mathcal{S}_\epsilon$. We denote the estimated probabilities as $\hat{P}_i(\cdot)$ which are estimated by obtaining a time series of length $N$ with the desired \PSD\ and computing,
\begin{align} \nonumber
	\hat{\mathcal{P}}_i\left(\left|\tilde{P}\right| \geq \textbf{Pow}(S)\right) = \frac{1}{N}\sum_{k=1}^{N}\mathbf{1}\left(\left|\tilde{P}_k\right| \geq \textbf{Pow}(S)\right)
\end{align}
The results are repeated for $2\times 10^4$ trials, and are shown in Figures~\ref{fig:estProbPowBound}-\ref{fig:estProbEngBound} as a histogram, with values typically well below the respective thresholds. In addition, we plot two generated time series sample paths in Figures~\ref{fig:samplePathPow}-\ref{fig:samplePathEng}. The two sample paths are typically within the power and energy capacity bounds defined.

Thus for a \PSD\ $S$, the power and energy capacity defined in Definition~\ref{def:capDef} encompass maximum probabilistic limits, i.e., the capacity, for a time series that is associated with the \PSD\ $S$.

\begin{figure}
	\centering
	
	\includegraphics[width=1\columnwidth]{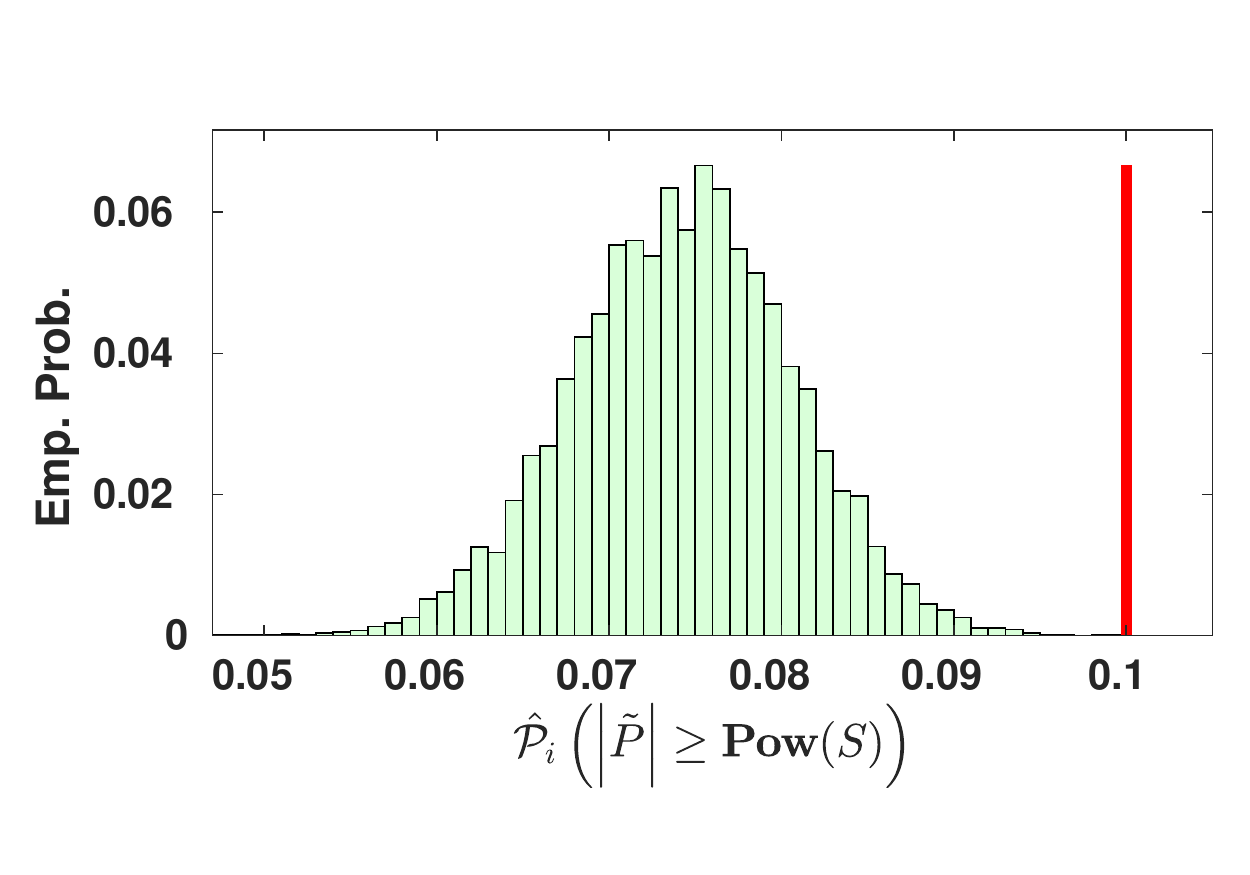}
	\caption{Empirically estimated probability bounds for the power capacity with $\epsilon_1 = 0.1$.}
	\label{fig:estProbPowBound}
	
	\centering
	\includegraphics[width=1\columnwidth]{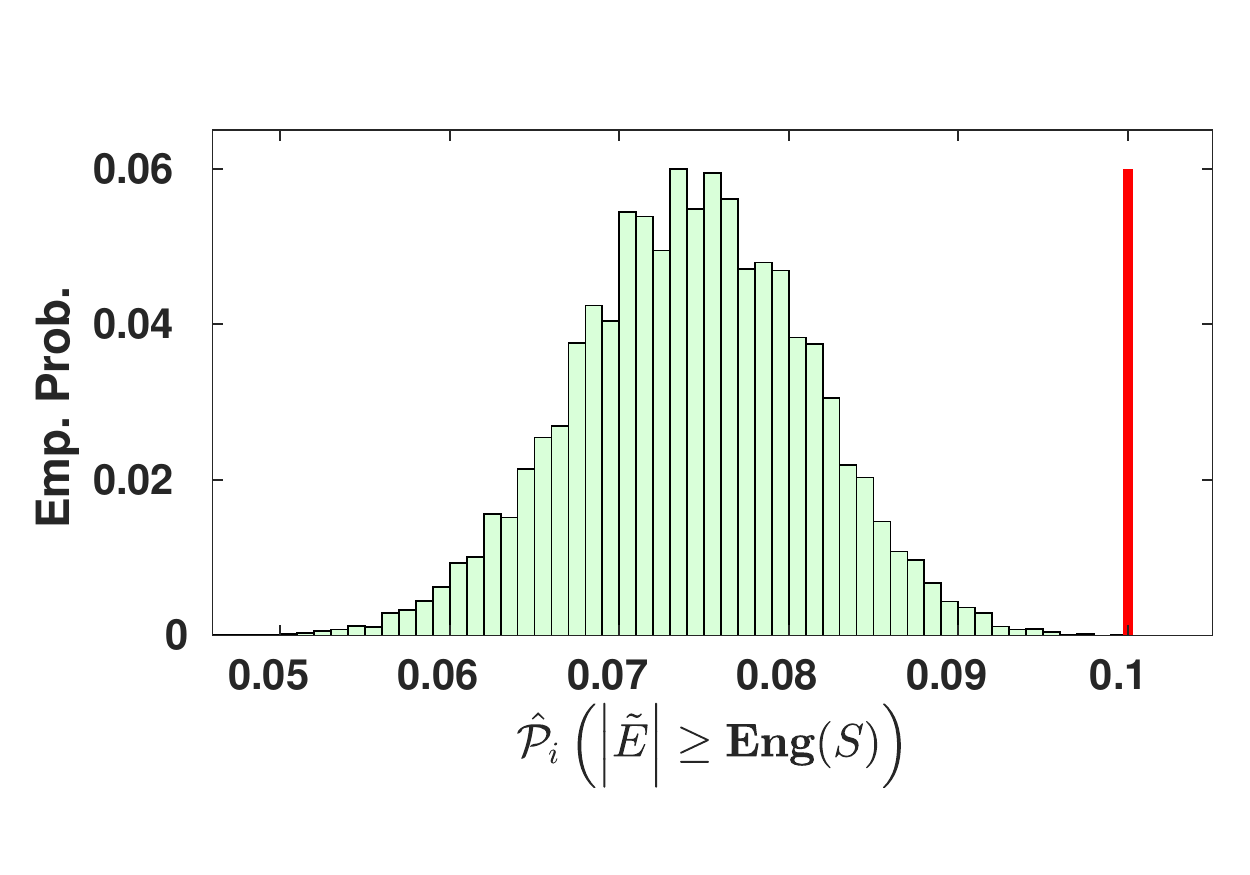}
	\caption{Empirically estimated probability bounds for the energy capacity with $\epsilon_4 = 0.1$.}
	\label{fig:estProbEngBound}

\end{figure}

\begin{figure}
	\centering
	\includegraphics[width=1\columnwidth]{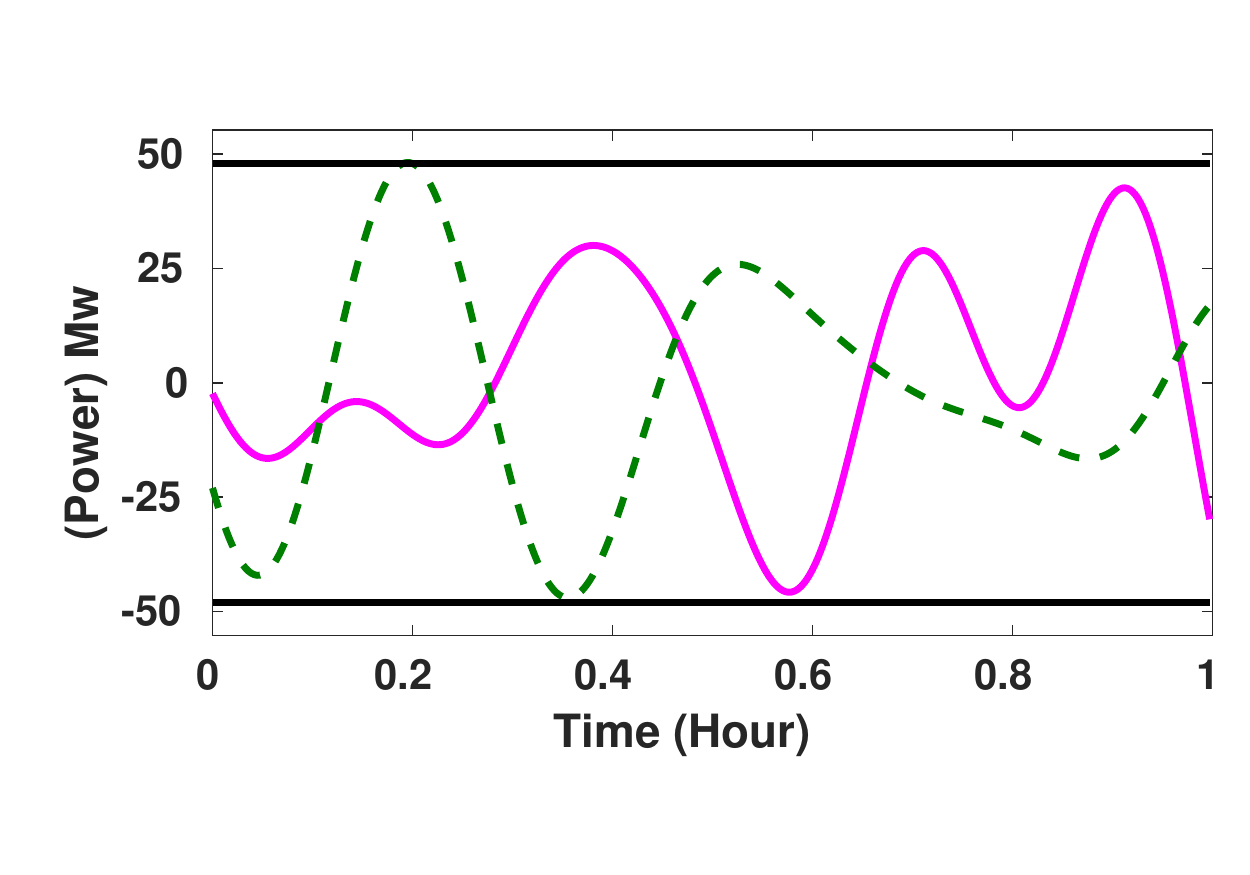}
	\caption{Two sample paths of a power consumption trajectory.}
	\label{fig:samplePathPow}
	
	\centering
	\includegraphics[width=1\columnwidth]{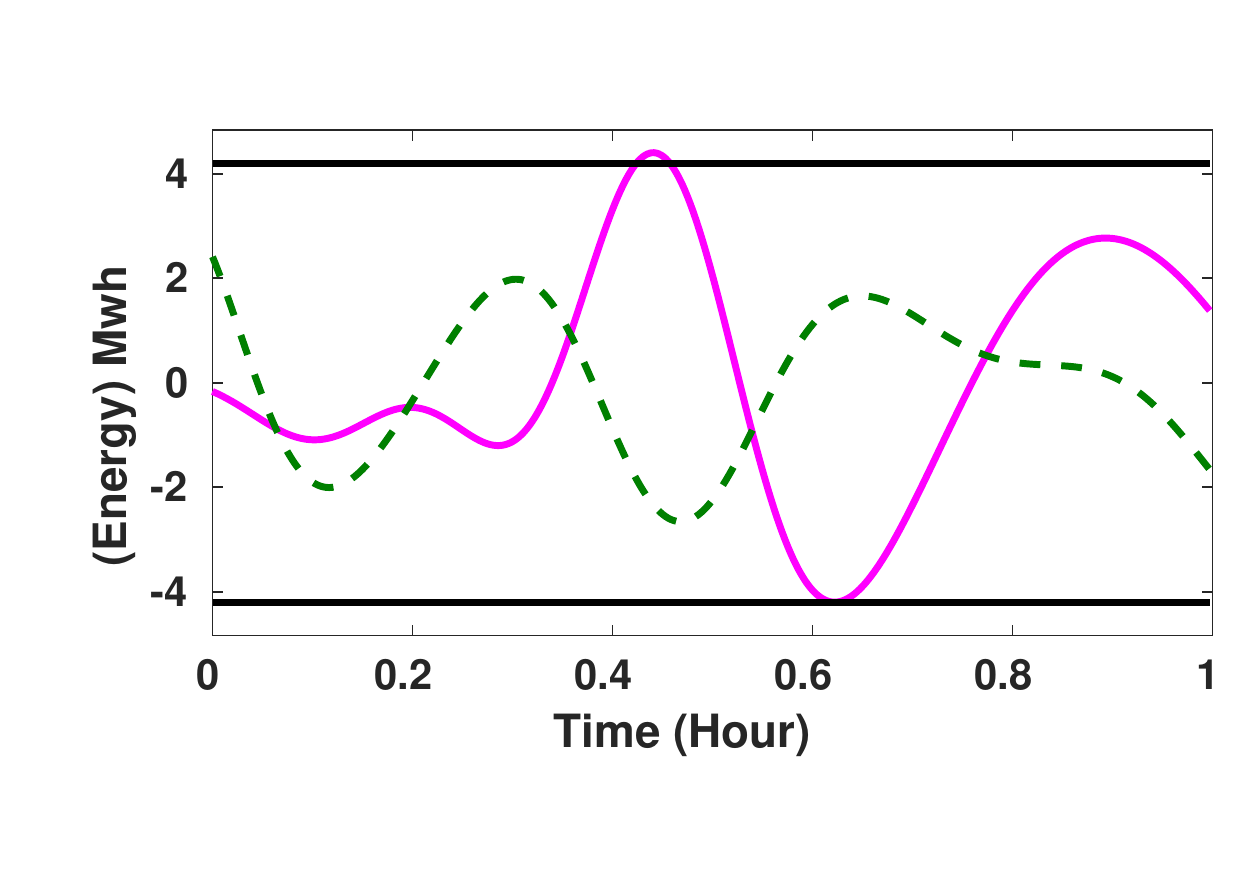}
	\caption{Two sample paths of an energy consumption trajectory.}
	\label{fig:samplePathEng}
	
\end{figure}

\begin{definition}
 	Let $\epsilon \in (0,1]$ and $S\in \mathcal{S}_\epsilon$ with power and energy capacity $\textbf{Pow}(S)$ and $\textbf{Eng}(S)$, respectively, then the un-used power and energy capacity is 
 	\begin{align}
	 	\mathcal{U}_{\textbf{Pow}}(S) &= c_1\left(1 - \frac{\textbf{Pow}(S)}{c_1}\right) \quad (W), \\
	 	\mathcal{U}_{\textbf{Eng}}(S) &= c_4\left(1 - \frac{\textbf{Eng}(S)}{c_4}\right) \quad (Wh),
 	\end{align}  
 	where $c_1$ and $c_4$ are as used in~\eqref{eq:QoSOne} and~\eqref{eq:QoS_energy}, respectively.
 \end{definition}
\fi
\section{Ensemble Capacity} \label{sec:aggCap}
Section~\ref{sec:indvidLoad} constructed constraints on the \PSD\ $S_{\tilde{P}}$ so to respect the capacity requirements for an individual load. In this section we do the same for \numLoads\ loads. This is achieved by defining an ``ensemble \PSD'' and utilizing the constraints on the individuals to develop constraints for the ensemble \PSD. First, the ensemble power deviation is defined as
\def\Psum{\bar{P}(t)}
\begin{align} \label{eq:ensemPowerDev}
\Psum\triangleq \sum_{i=1}^{\numLoads}\tilde{P}^i(t).
\end{align}
The ensemble \PSD\ is then the \PSD\ of $\bar{P}(t)$, that is,

\begin{align} \label{eq:ensemPSD}
	S_{\bar{P}}(\omega) = \int_{-\infty}^{\infty}R_{\bar{P}}(\tau)e^{-j\omega\tau}d\tau,
\end{align}
where $R_{\bar{P}}(\tau)$ is the autocorreleation function of $\bar{P}$. With the requirement that each $\tilde{P}^i(t)$ is jointly WSS, the ensemble power deviation~\eqref{eq:ensemPowerDev} is also WSS and the definition~\eqref{eq:ensemPSD} is valid.

\begin{figure}
	\centering
	\includegraphics[width=0.75\columnwidth]{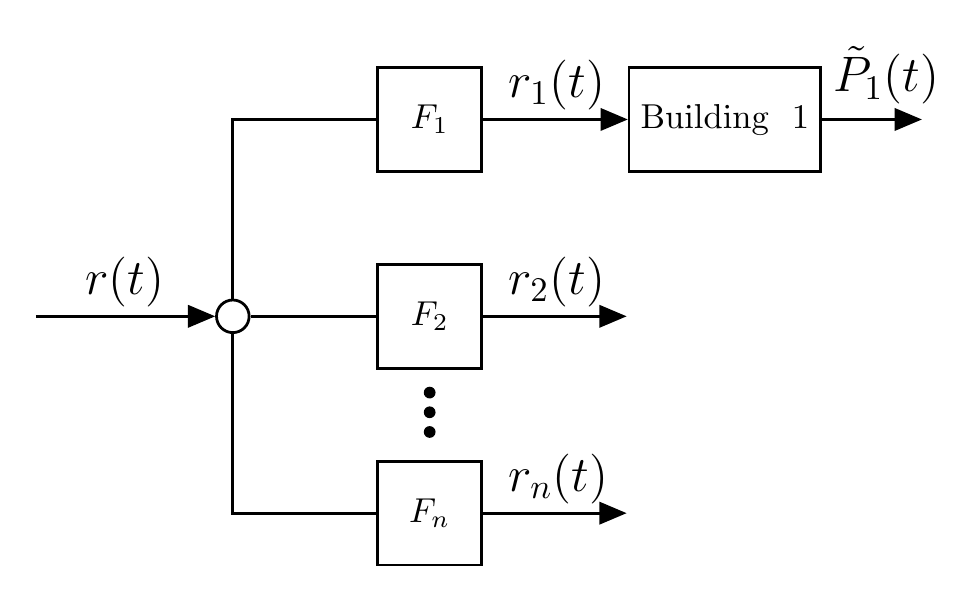}
	\caption{An example control architecture for flexible loads to assist the grid.}
	\label{fig:exampleControlArch}
\end{figure}

The \PSD\ of the sum, being the Fourier transform of the autocorrelation of the sum, depends on how the component signals $\tilde{P}^i$ are correlated to one another. In the limiting case when they are uncorrelated with one another, the autocorrelation of the sum is the sum of the autocorrelations since the cross correlations are 0. In that case the \PSD\ of the sum is also the sum of the \PSD s. Similarly, in the limiting cases of being perfectly anti-correlated or correlated, one can come up with precise relationship. However, the most interesting case from a practical point of view is when they signals are positively correlated but without having a perfect correlation of 1. To understand the reason, let us imagine how these signals $\tilde{P}^i$ will be generated. It is reasonable to assume a control architecture shows in Figure~\ref{fig:exampleControlArch}: a regulation signal from the balancing authority, $r(t)$, is split into $n$ distinct components, $r_i(t)$, $i=1,\dots,\numLoads$, so that $r_i(t)$ is within the capacity of load $i$, and some control system is used to ensure that $\tilde{P}^i$ tracks $r_i$. For instance, $r_i$ can be computed by band-pass filtering $r$ with a filter $F_i(j\omega)$. In a well designed system, these filters will ensure that the signals $r_i$ have sufficient positive correlation. Otherwise, each load - assuming they perfectly track their references - will be working against each other instead of working collaboratively to supply the total $r$. Although the design of the coordination architecture for the loads is beyond the scope of this work, we make the following assumption:
\begin{assumption}\label{ass:cross-corr-ass}
For any pair of loads $\ell,m$, $E[\tilde{P}_\ell(t)\tilde{P}_m(t+\tau)] =: R_{\ell,m}(\tau)\geq 0$ for every $\tau$.
\end{assumption}
\begin{lem} \label{lem:psdBounds}
	The ensemble \PSD\ of \numLoads\ loads, under Assumption~\ref{ass:cross-corr-ass}, satisfies
	\begin{align}\label{eq:PSDbounds-1}
		\sumOfPSD \leq S_{\bar{P}} \leq \numLoads \sumOfPSD,
	\end{align}
	where $\sumOfPSD \eqdef \sum_{i=1}^{\numLoads}S_{\tilde{P}^i}$.
\end{lem}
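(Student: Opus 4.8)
The plan is to reduce both inequalities in \eqref{eq:PSDbounds-1} to statements about the autocorrelation of $\bar P$ and then pass to the frequency domain. Using the definition \eqref{eq:ensemPowerDev}, linearity of expectation, and the joint wide-sense stationarity of the $\tilde P^i$,
\[
R_{\bar P}(\tau)=\sum_{i=1}^{\numLoads}\sum_{m=1}^{\numLoads}\mathbf E\!\left[\tilde P^i(t)\tilde P^m(t+\tau)\right]=\sum_{i=1}^{\numLoads}R_{\tilde P^i}(\tau)+\sum_{i\neq m}R_{i,m}(\tau),
\]
so taking the Fourier transform \eqref{eq:ensemPSD} term by term yields $S_{\bar P}(\omega)=\sumOfPSD(\omega)+\sum_{i\neq m}S_{i,m}(\omega)$, where $S_{i,m}$ is the cross-spectral density of $(\tilde P^i,\tilde P^m)$; the cross-terms combine into a real quantity because $R_{m,i}(\tau)=R_{i,m}(-\tau)$ for real jointly-WSS processes. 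Hence \eqref{eq:PSDbounds-1} is equivalent to $0\le\sum_{i\neq m}S_{i,m}(\omega)\le(\numLoads-1)\sumOfPSD(\omega)$ for every $\omega$.

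For the upper bound I would argue directly from the definition \eqref{eq:altDefPSD} and bypass the cross-spectra entirely. Setting $X_i^T\eqdef\int_0^T\tilde P^i(t)e^{-j\omega t}\,dt$, the elementary Cauchy--Schwarz estimate $\big|\sum_{i=1}^{\numLoads}X_i^T\big|^2\le\numLoads\sum_{i=1}^{\numLoads}|X_i^T|^2$ holds pathwise; taking expectation, dividing by $T$, and letting $T\to\infty$ gives $S_{\bar P}(\omega)\le\numLoads\,\sumOfPSD(\omega)$ at every frequency, using no hypothesis beyond existence of the spectral densities. (Equivalently, the spectral-density matrix $[S_{i,m}(\omega)]$ is Hermitian positive semidefinite and $\mathbf 1^{*}[S_{i,m}(\omega)]\mathbf 1\le\numLoads\,\mathrm{tr}[S_{i,m}(\omega)]$.) The case of $\numLoads$ identical loads, for which $\bar P=\numLoads\tilde P^1$ and $S_{\bar P}=\numLoads\,\sumOfPSD$, shows the constant cannot be improved.

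The lower bound is the step I expect to be the main obstacle, and it is the only place Assumption~\ref{ass:cross-corr-ass} is used. Evaluating the autocorrelation identity at $\tau=0$ gives $R_{\bar P}(0)=\sum_{i,m}R_{i,m}(0)\ge\sum_i R_{\tilde P^i}(0)$, since each $R_{i,m}(0)=\mathbf E[\tilde P^i(t)\tilde P^m(t)]\ge 0$ by Assumption~\ref{ass:cross-corr-ass}; by \eqref{eq:PSD_VAR} this is exactly $\int_0^\infty S_{\bar P}\,d\omega/\pi\ge\int_0^\infty\sumOfPSD\,d\omega/\pi$, i.e.\ the lower bound in the integrated (variance) sense, which is the form the capacity constraints \eqref{eq:intPowerDev}--\eqref{eq:intEngDev} actually require. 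The genuinely pointwise inequality $\sum_{i\neq m}S_{i,m}(\omega)\ge 0$ is more delicate, because the Fourier transform of a nonnegative correlation function need not be nonnegative; to obtain it I would either restrict to the weighted frequency integrals of \eqref{eq:intPowerDev}--\eqref{eq:intEngDev}, whose weights $1$, $|G(j\omega)|^2$ and $|H(j\omega)|^2$ have nonnegative inverse transforms so that Assumption~\ref{ass:cross-corr-ass} still closes the argument, or invoke the mild extra hypothesis that each $R_{i,m}$ is itself a positive-semidefinite function (forcing $S_{i,m}\ge 0$). I would make explicit in which of these senses \eqref{eq:PSDbounds-1} is meant.
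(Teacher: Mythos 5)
Your decomposition of $R_{\bar{P}}$ into diagonal and cross terms, and your Cauchy--Schwarz/Jensen argument for the upper bound applied to the limit definition~\eqref{eq:altDefPSD}, are exactly the paper's proof, so on that side the approaches coincide. Where you genuinely depart is the lower bound, and your caution there is warranted: the paper's proof observes $R_{\ell,m}(\tau)\geq 0$ for $\ell\neq m$, concludes $R_{\bar{P}}(\tau)\geq \sum_{\ell}R_{\ell,\ell}(\tau)$ for all $\tau$, and then asserts that the spectral inequality $\sumOfPSD\leq S_{\bar{P}}$ ``follows due to linearity of the Fourier transform.'' That is precisely the step you decline to take. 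Linearity only gives $S_{\bar{P}}(\omega)=\sumOfPSD(\omega)+\int_{-\infty}^{\infty}D(\tau)e^{-j\omega\tau}d\tau$ with $D(\tau)\eqdef\sum_{\ell\neq m}R_{\ell,m}(\tau)\geq 0$, and the Fourier transform of a nonnegative (even, real) function need not be nonnegative at every $\omega$; so Assumption~\ref{ass:cross-corr-ass} alone does not deliver the pointwise bound, and the weaker versions you prove are what is actually available. Your two repairs are sensible: the integrated/weighted reading does close the argument for the weights $1$, $\left|G(j\omega)\right|^2=2\left(1-\cos(\omega T)\right)/\omega^2$ and $\left|H(j\omega)\right|^2\propto 1/(\omega^2+\gamma^2)$, whose inverse transforms (a point mass, a triangle, and $e^{-\gamma|\tau|}$) are nonnegative --- though note the ramping weight $1-\cos(\omega\delta)$ in~\eqref{eq:invRateCon} is \emph{not} of this type, since its inverse transform is a signed combination of point masses, so that constraint would still need either a separate argument or your alternative hypothesis; the alternative of assuming each cross-correlation is a positive semidefinite function (equivalently, nonnegative cross-spectra) is what the pointwise statement~\eqref{eq:PSDbounds-1} really requires. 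In short, your upper bound matches the paper, and your lower-bound treatment is a more careful, weaker-but-provable version of the paper's one-line claim; the hedge you flag is a gap in the paper's justification, not in your proposal.
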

\begin{proof}
  Since the signals in the sum are jointly WSS, we have
  \begin{align}\label{eq:corr-of-sum}
  R_{\Psum,\Psum}(\tau) = \sum_{\ell=1}^\numLoads\sum_{m=1}^\numLoads R_{\ell,m}(\tau).
  \end{align}
  The cross correlation $R_{\ell,m}(\tau)$ when $\ell \neq m$  is non negative by Assumption~\ref{ass:cross-corr-ass} (the first one), $R_{\Psum,\Psum}(\tau) \geq \sum_{\ell=1}^n R_{\ell,\ell}(\tau)$, from which the lower bound follows due to linearity of the Fourier transform. The upper bound is proven by considering the definition~\eqref{eq:altDefPSD} for $S_{\bar{P}}(\omega)$. We have for all $\omega \in \mathbb{R}$, $S_{\bar{P}}(\omega) = $
  \begin{align}
  	&\lim_{T \to \infty}\frac{1}{T}\mathbf{E}\bigg[\bigg|\sum_{i=1}^{\numLoads}\int_{0}^{T}\tilde{P}^i(t)e^{-j\omega t}\bigg|^2\bigg] \\
  	&\leq\lim_{T \to \infty}\frac{\numLoads}{T}\sum_{i=1}^{\numLoads}\mathbf{E}\bigg[\bigg|\int_{0}^{T}\tilde{P}^i(t)e^{-j\omega t}\bigg|^2\bigg] 
  	= \numLoads\sumOfPSD(\omega),
  \end{align}
  where the bound is from Jensen's inequality: $\left|\sum_{i=1}^{\numLoads}x_i\right|^2 \leq \numLoads\sum_{i=1}^{\numLoads}\left|x_i\right|^2$ since $\left|\cdot\right|^2$ is convex.
\end{proof}

\begin{cor} \label{cor:homEnsem}
	For a homogeneous collection of \numLoads\ loads in which $P_\ell(t) = P_m(t)$ for every $\ell,m =1,\dots,\numLoads$, the ensemble \PSD\ is
	\begin{align} \nonumber
		S_{\bar{P}} = \numLoads\sumOfPSD.
	\end{align}
\end{cor}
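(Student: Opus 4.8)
The plan is to recognize this as the equality case of the upper bound in Lemma~\ref{lem:psdBounds} and to verify it by a direct computation from the definition of the ensemble \PSD. Homogeneity means $P_\ell(t)=P_m(t)$ for all $\ell,m$; since identical loads also share a common baseline, $P_\ell^b(t)=P_m^b(t)$, the power deviations coincide as well, $\tilde P^\ell(t)=\tilde P^m(t)$ for all $\ell,m$ and all $t$. Write $\tilde P(t)$ for this common process, so that the ensemble power deviation collapses to $\bar P(t)=\sum_{i=1}^{\numLoads}\tilde P^i(t)=\numLoads\,\tilde P(t)$.

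First I would compute the autocorrelation of $\bar P$. Starting from~\eqref{eq:corr-of-sum}, every term $R_{\ell,m}(\tau)=\mathbf{E}[\tilde P^\ell(t)\tilde P^m(t+\tau)]$ equals $R_{\tilde P}(\tau)$, so the double sum consists of $\numLoads^2$ identical terms and $R_{\bar P}(\tau)=\numLoads^2 R_{\tilde P}(\tau)$; equivalently this follows at once from $\bar P(t)=\numLoads\,\tilde P(t)$ and bilinearity of expectation. Taking the Fourier transform~\eqref{eq:ensemPSD} and using its linearity yields $S_{\bar P}(\omega)=\numLoads^2 S_{\tilde P}(\omega)$ for every $\omega$.

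Next I would evaluate $\sumOfPSD$. Since $\sumOfPSD=\sum_{i=1}^{\numLoads}S_{\tilde P^i}$ and each summand equals the common \PSD\ $S_{\tilde P}$, we get $\sumOfPSD=\numLoads\,S_{\tilde P}$. Combining the two identities gives $S_{\bar P}=\numLoads^2 S_{\tilde P}=\numLoads\,(\numLoads S_{\tilde P})=\numLoads\,\sumOfPSD$, which is the claim. Equivalently, this exhibits the upper inequality in~\eqref{eq:PSDbounds-1} as an equality, consistent with the fact that the Jensen bound $\left|\sum_{i}x_i\right|^2\leq\numLoads\sum_i|x_i|^2$ used in the proof of Lemma~\ref{lem:psdBounds} is tight exactly when all the $x_i$ are equal.

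There is essentially no obstacle: the corollary is a direct specialization of Lemma~\ref{lem:psdBounds}. The only point deserving a word of care is the passage from the hypothesis on the total powers $P_\ell$ to the statement about the deviations $\tilde P^\ell$ — i.e., noting that homogeneous loads share a baseline so that the $\tilde P^\ell$ are genuinely the same process — after which the result is a one-line computation.
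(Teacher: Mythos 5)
Your proof is correct and takes essentially the paper's route: the paper simply observes that the upper-bound argument of Lemma~\ref{lem:psdBounds} (the Jensen step $\left|\sum_i x_i\right|^2 \leq \numLoads\sum_i|x_i|^2$) holds with equality when all the $\tilde P^i$ coincide, which is exactly the equality case you identify. Your direct computation via $\bar P = \numLoads\,\tilde P$, $R_{\bar P}=\numLoads^2 R_{\tilde P}$, and $\sumOfPSD=\numLoads S_{\tilde P}$ is an equivalent one-line verification, and your remark that identical loads share a baseline (so equal $P_\ell$ implies equal $\tilde P^\ell$) is a reasonable and correct reading of the hypothesis.
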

The proof follows from the same argument used in establishing the upper bound case of Lemma~\ref{lem:psdBounds}.

In addition, for a heterogeneous ensemble the lower bound in Lemma~\ref{lem:psdBounds} will be loose as \numLoads\ increases. A better bound can be obtained by `binning' the heterogeneous collection into several homogeneous ensembles and utilizing the result of Corollary~\ref{cor:homEnsem}. 
\begin{cor} \label{cor:hetEnsem}
	For a heterogeneous collection of \numLoads\ loads with \numBins\ bins indexed by $\ell$ with \numLoadsBins\ loads in the $\ell^{th}$ bin, the ensemble \PSD\ is bounded as,
	\begin{align} \nonumber
	\sum_{\ell=1}^{\numBins}\numLoadsBins\sumOfPSDEll \leq S_{\bar{P}} \leq \numBins\sum_{\ell=1}^{\numBins}\numLoadsBins\sumOfPSDEll,
	\end{align}
	with
	\begin{align} \nonumber
		\sumOfPSDEll \triangleq \sum_{(i \ \in \ \text{bin} \ \ell)}S_{\tilde{P}^i}.
	\end{align}
\end{cor}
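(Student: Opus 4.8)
\emph{Proof strategy.} The plan is to apply Lemma~\ref{lem:psdBounds} one level up: not to the $\numLoads$ individual loads, but to the $\numBins$ bin-level aggregates, and then to replace each bin's \PSD\ by the exact value furnished by Corollary~\ref{cor:homEnsem}.

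First I would introduce, for each $\ell = 1,\dots,\numBins$, the bin power deviation $\bar{P}^\ell(t) \eqdef \sum_{(i \ \in \ \text{bin} \ \ell)}\tilde{P}^i(t)$, so that $\bar{P}(t) = \sum_{\ell=1}^{\numBins}\bar{P}^\ell(t)$. Each $\bar{P}^\ell$ is a finite sum of jointly WSS signals and is therefore itself WSS with a well-defined \PSD. Because the loads collected in bin $\ell$ are homogeneous --- their power consumptions, hence their deviations, coincide --- Corollary~\ref{cor:homEnsem} applies inside bin $\ell$ and gives $S_{\bar{P}^\ell} = \numLoadsBins\,\sumOfPSDEll$, where $\sumOfPSDEll = \sum_{(i \ \in \ \text{bin} \ \ell)}S_{\tilde{P}^i}$.

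Next I would check that the $\numBins$ processes $\{\bar{P}^\ell\}_{\ell=1}^{\numBins}$ themselves meet the hypotheses of Lemma~\ref{lem:psdBounds}. Joint wide-sense stationarity is inherited because each $\bar{P}^\ell$ is a finite linear combination of the jointly WSS signals $\{\tilde{P}^i\}$. For the nonnegativity of the inter-bin cross-correlations, expand $R_{\bar{P}^\ell,\bar{P}^m}(\tau) = \sum_{(i \ \in \ \text{bin} \ \ell)}\sum_{(k \ \in \ \text{bin} \ m)} R_{i,k}(\tau)$, a sum of terms each nonnegative by Assumption~\ref{ass:cross-corr-ass}. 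Hence the ``super-ensemble'' of the $\numBins$ bins satisfies the standing assumption, and Lemma~\ref{lem:psdBounds} yields $\sum_{\ell=1}^{\numBins} S_{\bar{P}^\ell} \leq S_{\bar{P}} \leq \numBins\sum_{\ell=1}^{\numBins} S_{\bar{P}^\ell}$. Substituting $S_{\bar{P}^\ell} = \numLoadsBins\,\sumOfPSDEll$ gives exactly the asserted two-sided bound.

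I do not expect a genuine obstacle here: the statement is a short corollary obtained by composing Lemma~\ref{lem:psdBounds} with Corollary~\ref{cor:homEnsem}. The one place that needs care --- and where a too-hasty argument could slip --- is verifying that the hypotheses of Lemma~\ref{lem:psdBounds} are \emph{inherited} by the bin aggregates, namely joint WSS and nonnegativity of inter-bin cross-correlations; both follow from closure of the jointly-WSS property under finite sums and from Assumption~\ref{ass:cross-corr-ass}, as above. It is also worth remarking (though not needed for the statement) that since $\numLoadsBins \geq 1$ one has $\sum_{\ell}\numLoadsBins\,\sumOfPSDEll \geq \sum_{\ell}\sumOfPSDEll = \sumOfPSD$, so the binned lower bound is never worse than the crude bound $\sumOfPSD \leq S_{\bar{P}}$ of Lemma~\ref{lem:psdBounds} --- which is the precise sense in which binning tightens the estimate.
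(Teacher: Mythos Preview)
Your proof is correct and follows essentially the same route as the paper: apply Lemma~\ref{lem:psdBounds} to the $\numBins$ bin-level aggregates and replace each bin's \PSD\ by $\numLoadsBins\,\sumOfPSDEll$ via Corollary~\ref{cor:homEnsem}. Your write-up is in fact more careful than the paper's, which does not explicitly verify that the bin aggregates inherit joint WSS and nonnegative cross-correlations from Assumption~\ref{ass:cross-corr-ass}.
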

\begin{proof}
	Apply directly the same procedure as in Lemma~\ref{lem:psdBounds}, except for \numBins\ loads with the $\ell^{th}$ \PSD\ being  \numLoads\sumOfPSDEll\ (Corollary~\ref{cor:homEnsem}).
\end{proof}
\subsection{Ensemble constraint set}
We now develop a constraint set on \sumOfPSDEll\, based on the set of constraints for the individual \PSD s that are in the definition of \sumOfPSDEll. In light of the results of Corollary~\ref{cor:hetEnsem}, we assume a homogeneous collection of loads. The idea is to sum the individual constraints over \numLoadsBins\ so that the definition of \sumOfPSDEll\ can be inserted. We do this for the rate constraint below, 
\begin{align} \nonumber
	&\sum_{(i \ \in \ \text{bin} \ \ell)}\int_{0}^{\infty}\left(1-\cos(\omega\delta)\right)S_{\tilde{P}^i}(\omega)d\omega \leq \numLoadsBins\frac{\pi\epsilon_2c^2_{2}}{2}, \\
	&\iff \nonumber \int_{0}^{\infty}\left(1-\cos(\omega\delta)\right)\sumOfPSDEll(\omega) d\omega \leq \numLoadsBins\frac{\pi\epsilon_2c^2_{2}}{2}.
\end{align}
The other constraints are obtained in a similar fashion, and the full constraint set for \sumOfPSDEll\ is: $\mathcal{S}(\numBins,\numLoadsBins,\ell) \triangleq$
\begin{align}
	\bigg\{ &\sumOfPSDEll \ : \ \nonumber \sumOfPSDEll \geq 0,  \\
	& \nonumber \int_{0}^{\infty} \sumOfPSDEll(\omega) d\omega\leq \numBins\numLoadsBins\bigg(\pi\epsilon_1 c_{1}^2\bigg), \nonumber\\
	&\int_{0}^{\infty} (1-\cos{(\omega\delta)}) \sumOfPSDEll(\omega) d\omega \leq \numBins\numLoadsBins\bigg(\frac{\pi\epsilon_2 c_{2}^2}{2}\bigg), \nonumber \\
	&\int_{0}^{\infty} \left|G(j\omega)\right|^2 \sumOfPSDEll(\omega)d\omega \leq \numBins\numLoadsBins\bigg(\pi\epsilon_3 c_{3}^2\bigg), \nonumber \\
	&\int_{0}^{\infty}\left|H(j\omega)\right|^2 \sumOfPSDEll(\omega)d\omega \leq \numBins\numLoadsBins\bigg( \pi \epsilon_4 c_{4}^2\bigg) \nonumber \bigg\}.
\end{align}
The sets $\mathcal{S}(1,\numLoadsBins,\ell)$ and $\mathcal{S}(\numBins,\numLoadsBins,\ell)$ represent the constraint set for the $\ell^{th}$ \PSD\ in the lower and upper bounds in Corollary~\ref{cor:hetEnsem}, respectively. Additionally, $\mathcal{S}(1,\numLoads,1)$ exactly represents the constraint set for the single ensemble \PSD\ in Corollary~\ref{cor:homEnsem} of a homogeneous ensemble with \numLoads\ loads.

\section{Resource Allocation for Flexible Loads} \label{sec:psdFeas}
We illustrate here how the ensemble constraint set in Section~\ref{sec:aggCap} can be used by a Balancing Authority (BA) for resource allocation. Conceptually, our proposed method ``projects'' the needs of the BA onto the ensemble constraint set. We first describe how a BA can incorporate the ensemble constraint set into an optimization problem for resource allocation and then how a BA can determine its needs spectrally.

\subsection{Allocation through projection}
{We denote by $S^{\loadDem}(\omega)$ a \PSD\ that represents the BA's needs. Computation of this \PSD\ will be discussed in the next section. Resource allocation is performed by projecting the \PSD\  $S^{\loadDem}(\omega)$ - what the grid needs - onto the Cartesian product $\times_{\ell=1}^{\numBins}\mathcal{S}(\numBins,\numLoadsBins,\ell)$ - what the ensemble of loads can provide. } The projection problem for a given value of $\numBins$ is,
\begin{align} \label{prob:optProb}
\min_{\{\sumOfPSDEll\}_1^{\numBins}} \quad &\int_{0}^{\infty}\left(\sumOfPSDAgg(\omega)  -S^{\loadDem}(\omega)\right)^2d\omega  \\ \nonumber
& \text{s.t.} \quad \forall \ \ell \in \{1,\dots, \numBins\}, \quad \sumOfPSDEll \in \mathcal{S}(\numBins,\numLoadsBins,\ell), \\ \label{eq:aggPSDCon}
&\qquad \sumOfPSDAgg = \sum_{\ell=1}^{\numBins}\sumOfPSDEll 
\end{align}
{Doing so allocates the needs of the grid across all loads; the loads will cover as much of the needs of the BA as they can while maintaining their QoS. In other words, the projection computes the regions shown in Figure~\ref{fig:Resource_Power} corresponding to each class of flexible loads.}
\begin{comment} \label{com:solveProbParam}
	If the ensemble of loads is heterogeneous, the BA can solve~\eqref{prob:optProb} twice. If the BA uses $\hat{n}$ bins, the first time the BA will set $\numBins = \hat{n}$ (upper bound) and the second time using $\numBins = 1$ (lower bound). Consequently, for a purely homogeneous collection the BA only needs to solve~\eqref{prob:optProb} once with $\numBins=1$, and \numLoads$_1$ = \numLoads\ (see Corollary~\ref{cor:homEnsem}). The LHS of~\eqref{eq:aggPSDCon} evaluated at the optimal solution of~\eqref{prob:optProb} then represents a bound on $S_{\bar{P}}$ for a heterogeneous ensemble, or is exactly $S_{\bar{P}}$ for a homogeneous ensemble.
\end{comment}


\subsection{Spectral Needs of the BA} \label{sec:gridNeed}

In the following we provide an example procedure for a BA to spectrally determine its needs, as illustrated in Figure~\ref{fig:Resource_Power}. The BA first estimates the \PSD, $\Phi^{\netDem}$, of its net demand, i.e., demand minus renewable generation. It can estimate this quantity from time series data of demand and renewable generation, or through a modeling effort, or a combination thereof. The next step for the BA is to fit a parameterized model to $\Phi^{\netDem}$, which is termed $S^{\netDem}$. All controllable resources, including generators, flywheels, batteries, and flexible loads, together have to supply $\Phi^{\netDem}$ (or its parameterized model $S^{\netDem}$). The third step is  obtain the portion of $S^{\netDem}$ that flexible loads have to provide (similar to what is shown in Figure~\ref{fig:Resource_Power}) by ``filtering'' $S^{\netDem}$. Letting $F(j\omega)$ be an appropriate filter, then the \PSD\ of the signal the grid authority would like flexible loads to contribute is:
\begin{align}\label{eq:BPfilter}
S^{\loadDem}(\omega) = |F(j\omega)|^2 S^{\netDem}(\omega).
\end{align}

In the numerical example in this paper, we empirically estimate $\Phi^{\netDem}$ from time series data (from BPA, a balancing authority in the Pacific Northwest), and then obtain $S^{\netDem}$ by fitting an ARMA$(p,q)$ model to $\Phi^{\netDem}$. An example of $S^{\netDem}$ is shown as the orange line in Figure~\ref{fig:Resource_Power}, with $S^{\loadDem}$ being any of the shaded regions in Figure~\ref{fig:Resource_Power}. 
\begin{comment}
We have introduced a procedure for the BA to determine its needs in the spectral density domain. This procedure is \textit{completely} independent from our characterization of capacity presented. The next step is to use the results of the procedure, the BA's spectral needs, to find the closest \PSD\ of the loads to the grid's need.
\end{comment}

\section{Numerical Examples} \label{sec:numExp}
An example of determining the capacity of a collection of HVAC systems in commercial buildings is illustrated in this section. For the numerical experiments we construct homogeneous and heterogeneous ensembles from two types of HVAC systems. The parameters of each type are displayed in Table~\ref{tab:simParams_bldg}, and are chosen so that the HVAC systems are representative of those in small and large commercial buildings (hence the large and small superscripts in Table~\ref{tab:simParams_bldg}).

To aid exposition of the results, we define the following power and energy capacity indices,
\begin{align} \label{eq:aggCapInd}
\zeta^{\text{P}} &=  \frac{\textbf{Pow}(\sumOfPSDAgg)}{\textbf{Pow}(S^{\loadDem})} \times 100\%, \\
\zeta^{\text{E}} &=  \frac{\textbf{Eng}(\sumOfPSDAgg)}{\textbf{Eng}(S^{\loadDem})} \times 100\%,
\end{align}
so as to show the percentage of power and energy capacity required by the BA that can be covered by the loads. In the above, the numerator \PSD\ abstractly represents the LHS of~\eqref{eq:aggPSDCon} at the optimal solution of problem~\eqref{prob:optProb}. In all scenarios we solve the discrete time finite dimensional version of the convex optimization problem~\eqref{prob:optProb} with CVX~\cite{cvx}. That is, to obtain a finite dimensional optimization problem we convert from continuous to discrete frequency and approximate the resulting integrals on $[0,\pi]$\pb{or did you mean -$\pi$? Since the continuous version is from 0 I imagined you want to use 0 here?} with trapezoidal integration. All relevant simulation parameters, if not specified otherwise, can be found in Table~\ref{tab:simParams_bldg}.

\begin{table}
	\centering
	\def\arraystretch{1.4}%
	\caption{Simulation parameters}
	\label{tab:simParams_bldg}
	\begin{tabular}{|| l |c| c|| c |c| c||}
		Par.    & Unit      & Value  & Par.          & Unit         & Value \\
		\hline
		$c_{1}^{\textSmall}$   & kW        & 4      & $\gamma^{\textSmall}$  & $1/$hour     & 2.78\\
		$c_{2}^{\textSmall}$   & kW        & 0.8    & $\beta^{\textSmall}$   &$^\circ C/$kWh& 0.3597 \\
		$c_{4}^{\textSmall}, c_{4}^{\textLarge}$ &$^{\circ}C$& 1.11   & $\gamma^{\textLarge}$  & $1/$hour     & 177.6\\
		$c_{3}^{\textSmall}$   & kWh       & 0.5    & $\beta^{\textLarge}$   &$^\circ C/$kWh& 0.0450 \\
		$c_{1}^{\textLarge}$   & kW        & 40     & $T$           & Day        & 1 \\
		$c_{2}^{\textLarge}$   & kW        & 8      & $\{\epsilon_i\}_{i=1}^4$&N/A  & 0.05 \\
		$c_{3}^{\textLarge}$   & kWh       & 5      &$\delta$    & Sec     & 10  \\
	\end{tabular}
\end{table}
\subsection{BA's spectral needs}\label{sec:obtain_Ref_PSD} 
The net demand data is collected from BPA (a BA in the pacific northwest United States). The \PSD\ of the net demand is determined using the method described in Section~\ref{sec:gridNeed}. The empirical net demand \PSD\ $\Phi^{\netDem}$ is estimated by the \emph{pwelch} command in MATLAB. We choose an ARMA(2,1) model to fit to the empirically estimated \PSD. Since the estimate $\Phi^{\netDem}$ will cap out at the Nyquist frequency $1/10$min, we extrapolate the net demand \PSD\ to the higher frequencies. The empirical \PSD\ (denoted $\Phi^{\netDem}$) and the extrapolated net demand \PSD\ (denoted $S^{\netDem}$) are shown in Figure~\ref{fig:NT_PSD}. 

We then choose two passbands to filter $S^{\netDem}$: (i) a low passband [$1/8$,$1/2$] (1/hour) and (ii)  a high passband [$1/30$,$1$] (1/min). The results of ``filtering''  $S^{\netDem}$ (see eq.~\eqref{eq:BPfilter}) are also shown in Figure~\ref{fig:NT_PSD}. The low passband \PSD\ is termed $S^{\loadDem}_{\textLow}$ and roughly corresponds to the region for TCLs in Figure~\ref{fig:Resource_Power}. The high passband \PSD\ is termed $S^{\loadDem}_{\textHigh}$ and roughly corresponds to the region for HVAC systems in Figure~\ref{fig:Resource_Power}.

\begin{figure}
	\centering
	\includegraphics[width=\columnwidth]{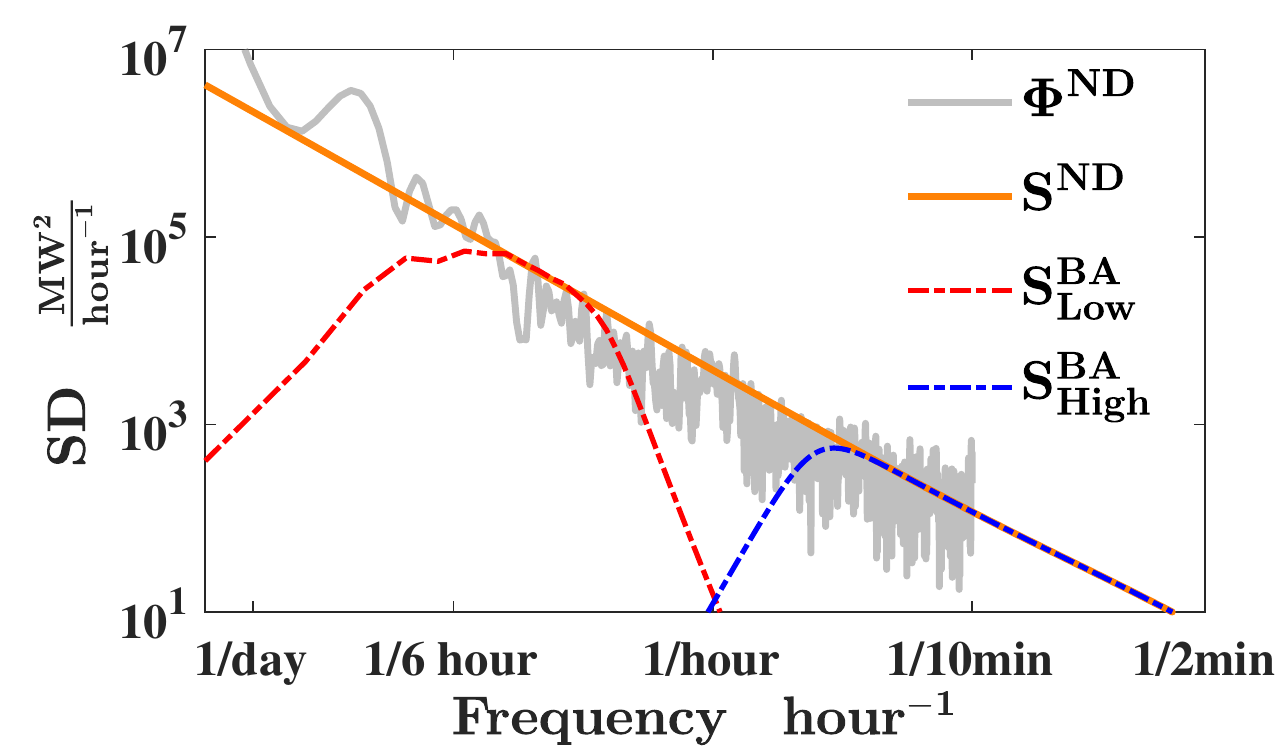}
	\caption{Empirical net demand \PSD\ , modeled \PSD\ for BPA's net demand, and the two reference \PSD's for the high and low frequency passband.}
	\label{fig:NT_PSD}
\end{figure}
\subsection{Meeting the BA's needs with large buildings} \label{subsec:resHomoLoad}
In this scenario we consider a homogeneous collection of large commercial buildings. The idea is to illustrate how many of these large commercial buildings would be required to meet the grids needs as a function of frequency. To do this, the two reference \PSD s obtained from the previous section are projected (by solving~\eqref{prob:optProb}) onto the \emph{same} ensemble constraint set with a varying number of large commercial buildings.

First, we show the results for $\numLoads = 15000$ large commercial buildings in Figure~\ref{fig:NT_PSD}. For the reference \PSD\ with high passband, the loads are able to meet the requirements of the grid, with an aggregate power and energy capacity index of $\zeta^{P} \approx 100 \%$ and $\zeta^{E} \approx 100 \%$. However, for the reference \PSD\ with low passband the loads are unable to meet the grids needs, with an aggregate power and energy capacity index of $\zeta^{P} = 43 \%$ and $\zeta^{E} = 26 \%$. 

For a varying number of loads the power and energy capacity index are recorded and plotted as a function of \numLoads\ in Figure~\ref{fig:homoPowEngInd}. These results indicate that the BA would require $\approx$ 5000 large commercial buildings in the higher passband and $\approx$ 32000 large commercial buildings for the low passband, to fulfill its needs. We also see that the BAs energy capacity requirement is met with fewer loads\pb{Dear co-authors, this is a common gramar error, please avoid it. You can have less milk, but you can only have fewer donuts. Not less donuts. When discussing countable things, use ``fewer''. When discussing uncountable things like milk, use ``less''.} (compared to its power capacity requirement) at the higher frequency range. In the lower frequency range, this conclusion is reversed.  

In summary, this numerical experiment suggests that the commercial HVAC systems considered here are more suitable to assist the grid by tracking signals with \PSD\ in the higher passband. If a BA wanted the commercial HVAC systems to track reference signals in the lower frequency range, than significantly more large commercial HVAC systems would be required.    
 
\begin{figure}
	\centering
	\includegraphics[width=1\columnwidth]{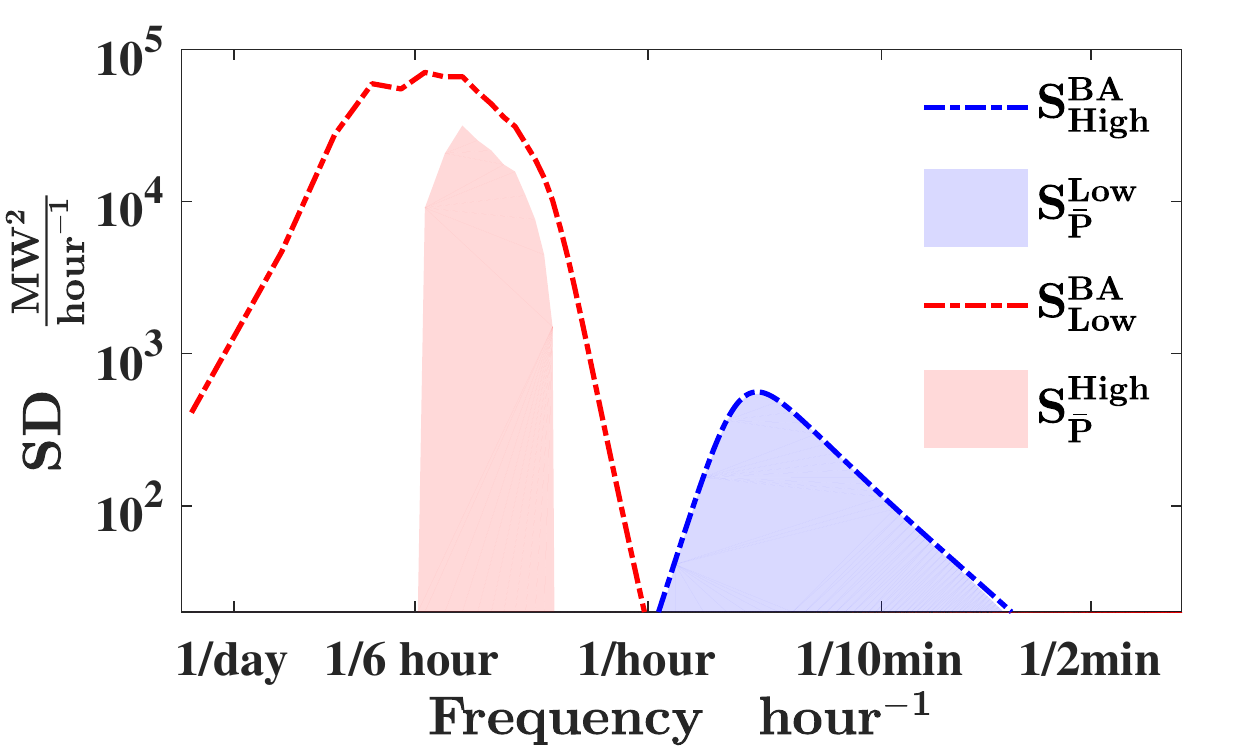}
	\caption{The two reference \PSD s and the ensembles \PSD\ (boundary of the shaded regions) obtained by solving~\eqref{prob:optProb} for a homogeneous collection of $\numLoads=15000$ loads.}
	\label{fig:Ref_PSD_case1}
	
	\centering
	\includegraphics[width=\columnwidth]{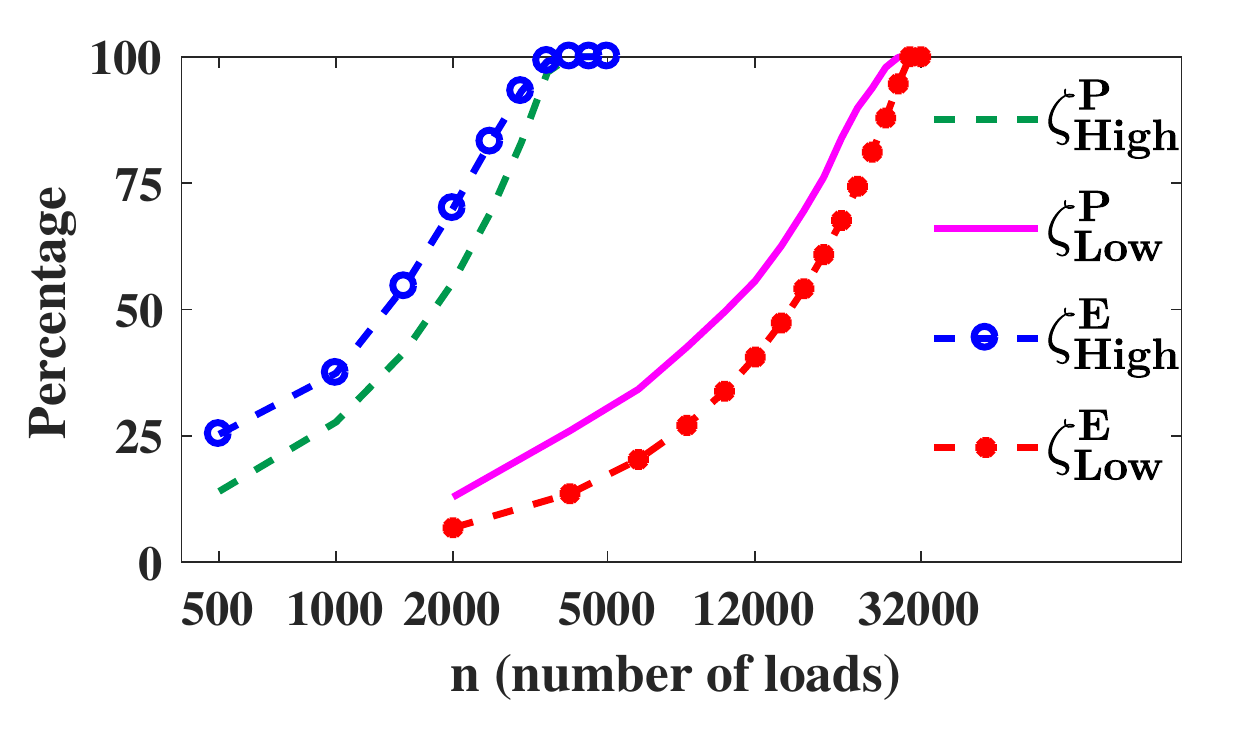}
	\caption{The aggregate power and energy capacity index for the two different reference \PSD s with a homogeneous collection of loads plotted against \numLoads.}
	\label{fig:homoPowEngInd}
	\centering
	
	\includegraphics[width=1\columnwidth]{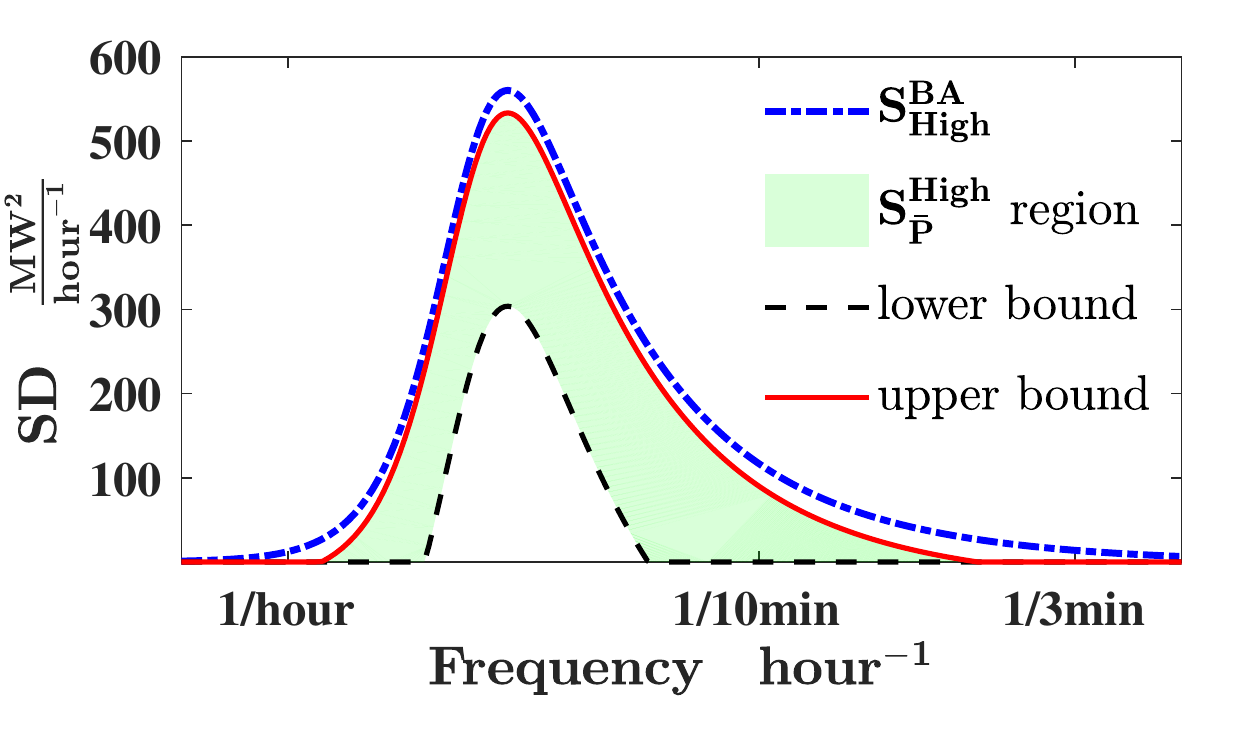}
	\caption{The upper and lower bounds on the ensemble \PSD\ for a heterogeneous ensemble.}
	\label{fig:hetEnsemBound}
\end{figure}
\subsection{Effects of heterogeneity}
We present a numerical experiment to illustrate the bounds in Corollary~\ref{cor:hetEnsem}. We consider $\numBins=2$ bins with, $\numLoads_1 = \numLoads_S = 900$ small and $\numLoads_2 = \numLoads_L = 2100$ large commercial buildings. We solve the problem~\eqref{prob:optProb} twice, with the appropriate parameters detailed in Comment~\ref{com:solveProbParam} for both times, so to obtain the upper and lower bounds in Corollary~\ref{cor:hetEnsem}. \ifshowArxiv This is done for the high passband in section~\ref{subsec:resHomoLoad} and a low passband of $[\frac{1}{3} \text{1/hours}, \frac{1}{15} \text{1/mins}]$. \fi \ifshowArxivAlt The \PSD\ that represents the grids needs in this experiment is the same \PSD\ with high passband for the experiments in section~\ref{subsec:resHomoLoad}. \fi

\ifshowArxivAlt
The results of this are shown in Figure~\ref{fig:hetEnsemBound}, where `upper bound' and `lower bound' represent the upper and lower bounds in Corollary~\ref{cor:hetEnsem}, respectively. Additionally, the shaded region in Figure~\ref{fig:hetEnsemBound} represents the region where the true ensemble \PSD\ lives. For this example, this gives us that the power capacity index is bounded as $17\% \leq \zeta^{P} \leq 69\%$ and the energy capacity index is bounded as $30\% \leq \zeta^{E} \leq 86\%$. 
\fi

\ifshowArxiv
The results of the high passband are shown in Figure~\ref{fig:hetEnsemBound}, where `upper bound' and `lower bound' represent the upper and lower bounds in Corollary~\ref{cor:hetEnsem}, respectively. Additionally, the shaded region in Figure~\ref{fig:hetEnsemBound} represents the region where the true ensemble \PSD\ lives. For this example, this gives us that the power capacity index is bounded as $17\% \leq \zeta^{P} \leq 69\%$ and the energy capacity index is bounded as $30\% \leq \zeta^{E} \leq 86\%$. 

The results of the low passband are shown in Figure~\ref{fig:hetEnsemBoundLowPass}, where the outline of the shaded green regions represent the bounds in Corollary~\ref{cor:hetEnsem}. Interestingly, the small and large buildings separate on the frequency axis. The large buildings cover the lower frequency components, whereas the small buildings cover the higher frequency components. 
\begin{figure}
	\includegraphics[width=1\columnwidth]{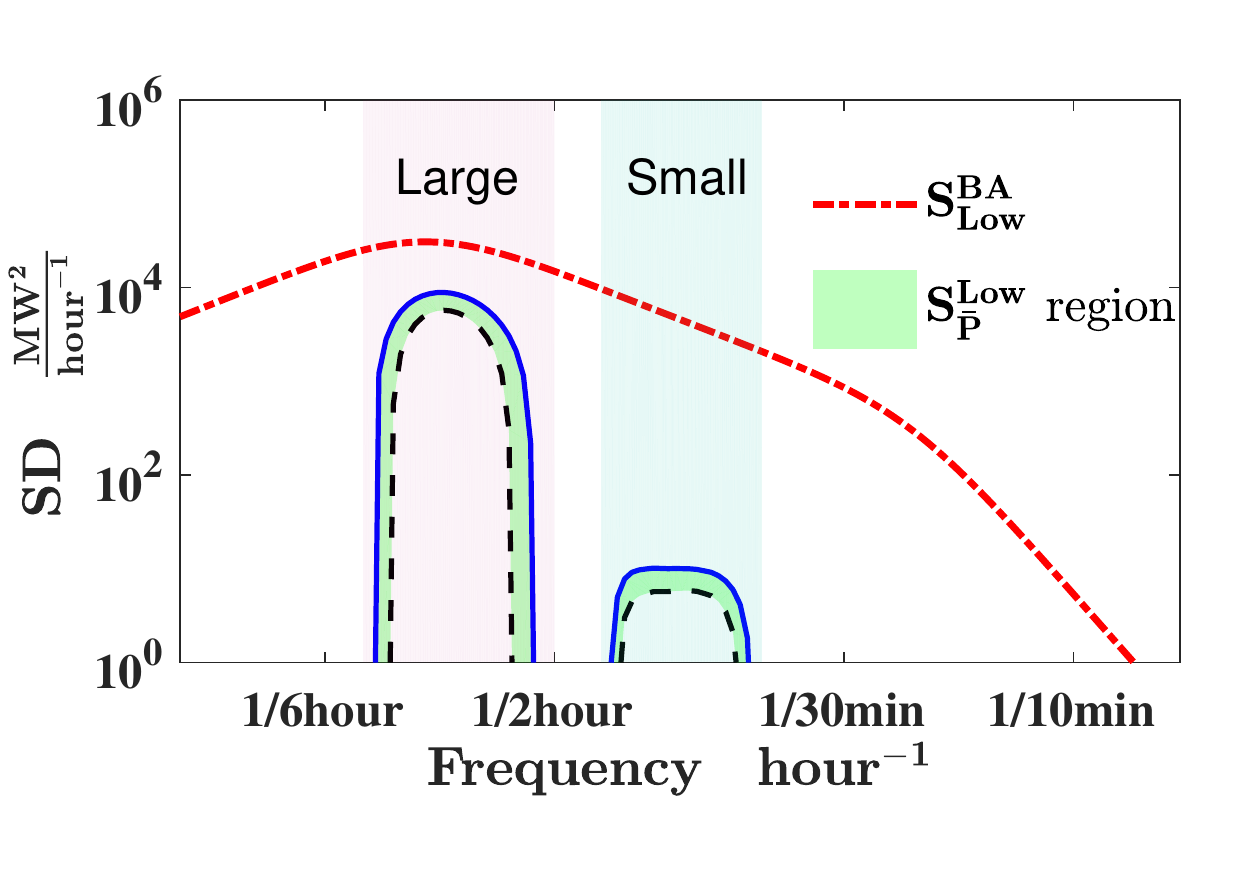}
	\caption{The upper and lower bounds on the ensemble \PSD\ for a heterogeneous ensemble with low passband.}
	\label{fig:hetEnsemBoundLowPass}	
\end{figure}
\fi

\section{Summary and conclusion}
Our characterization of virtual energy storage capacity of flexible loads allows a balancing authority to quantify which loads can contribute how much to mitigating supply demand mismatch. In contrast to past work, our method can be used for long term planning. The key insight is to characterize capacity through statistics of the reference signal rather than on specific instances of the signal. This framework also has another benefit: it enables us to define power and energy capacity of virtual energy storage in language of real energy storage: kW's and kWh's.  

An open problem is to extend the framework to non-stationary statistics of the grid's demand-supply mismatch and/or loads' demand deviation. Another open problem is the extension to the case of on/off loads. A key challenge for on/off loads will be to characterize their cycling constraint. Another avenue for future work is to consider a time of life QoS of the flexible load.

\ifx 0
\section{Appendix}
Suppose that we define the energy deviation as,
\begin{align} \label{eq:altEngDev}
	\tilde{E}(t) \triangleq \int_{t-T}^{t}\tilde{P}(\sigma)d\sigma,
\end{align}
with QoS constraint
\begin{align} \label{eq:altEngQoS}
	\left|\tilde{E}(t)\right| \leq c_4.
\end{align}
The transfer function relating $\tilde{P}$ to $\tilde{E}$ for the dynamics in~\eqref{eq:altEngDev} is,
\begin{align} \label{eq:altEngTF}
	G(s) = \frac{1 - e^{-sT}}{s},
\end{align}
with frequency response,
\begin{align} \label{eq:freqRespAlt}
	\left|G(j\omega)\right|^2 = 2\frac{\bigg(1-\cos(T\omega)\bigg)}{\omega^2}.
\end{align}
The transfer function~\eqref{eq:altEngTF} is BIBO stable so that from Proposition~\ref{prop:hajekTwo} the following holds,
\begin{align}
	S_{\tilde{E}}(\omega) = \left|G(j\omega)\right|^2S_{\tilde{P}}(\omega),
\end{align}
where $S_{\tilde{E}}$ is the \PSD\ of $\tilde{E}$.
Now the probabilistic form of~\eqref{eq:altEngQoS} is
\begin{align}
\mathcal{P}\bigg(\left|\tilde{E}(t)\right| \geq c_4\bigg) \leq \epsilon_4,
\end{align}
which through the Chebyshev inequality~\eqref{eq:chebIneq} it is sufficient to require,
\begin{align}
	\sigma_{\tilde{E}}^2 \leq c_4^2\epsilon_4,
\end{align}
with representation through the Wiener-Khinchin theorem,
\begin{align} \label{eq:wktAltRep}
	\frac{1}{2\pi}\int_{-\infty}^{\infty}\left|G(j\omega)\right|^2S_{\tilde{P}}(\omega)d\omega \leq c_4^2\epsilon_4.
\end{align}
Now substituting the quantity~\eqref{eq:freqRespAlt} into~\eqref{eq:wktAltRep} yields,
\begin{align}
	\int_{-\infty}^{\infty}\frac{\big(1-\cos(T\omega)\big)}{\omega^2}S_{\tilde{P}}(\omega)d\omega \leq \pi c_4^2\epsilon_4,
\end{align}
which is the exact same constraint as~\eqref{eq:intEngDev}.
\fi

\ifx 0
\section{Appendix}
\subsection{Proof of Lemma~\ref{lem:setFeas}}
	\ifx 0
	Immediately $S_{\tilde{P}}(\omega) = 0 \ \forall \omega$ is in $\mathscr{S}$. Although, it is not hard to find another element of this set. Consider the PSD,
	\begin{align}
	S_{\tilde{P}}(\omega) = \begin{cases}
	\alpha, & \omega = \omega_0. \\
	0, & \text{otherwise},
	\end{cases}
	\end{align}  
	with $\alpha \in \Re^+$. Inevitably $\forall \omega_0 \ \exists \alpha $ such that this PSD is in $\mathscr{S}$. The uniform $\alpha$, over $\omega_0$, being $\alpha = 0$, which results in the previous example. Yet another example, 
	\fi
To show the set $\mathcal{S}$ is non empty consider the PSD,
	\begin{align} \nonumber
	S_{\tilde{P}}(\omega) = \begin{cases}
	0, & \omega = 0. \\
	0, & \omega \geq \omega_{H}. \\
	\alpha, & \text{otherwise}.
	\end{cases}	
	\end{align}
	with $\alpha \in \mathbb{R}^+$, which is in $\mathcal{S}$.	
	To show convexity, we break the set $\mathcal{S}$ into the intersection of two sets, one containing the integral constraints, $\mathcal{S}_1$, and one containing the constraint $S_{\tilde{P}}(\omega) \geq 0$. Since the set of positive real numbers is closed under addition, $S_{\tilde{P}}(\omega) \geq 0$ is convex. For the other set,
	it suffices to show that if $S_1 \in \mathcal{S}_1$ and $S_2 \in \mathcal{S}_1$, then for all $\alpha \in (0,1)$ $\alpha S_1 + (1-\alpha)S_2 \in \mathcal{S}_1$. 
	\ifx 0
	We rewrite the constraint $S_{\tilde{P}}(0) = 0$ in integral form as,
	\begin{align} \nonumber
	S_{\tilde{P}}(0) = \int_{-\omega_H}^{\omega_H}\delta(\omega)S_{\tilde{P}}(\omega)d\omega = 0,
	\end{align}
	where $\delta(\omega)$ is the standard generalized delta function. 
	\fi
	Now for the general integral constraint if,
	\begin{align} \nonumber 
	\int_{a}^{b}fS_1dx \leq \beta, \ \text{and} \ \int_{a}^{b}fS_2dx \leq \beta
	\end{align}
	then,
	\begin{align} \nonumber
	\int_{a}^{b}f[\alpha S_1 + (1-\alpha)S_2]dx \leq \alpha \beta + (1-\alpha)\beta = \beta
	\end{align} 
	which proves the result, as $\mathcal{S}$ is the intersection of a finite number of convex sets $\square$.
\fi
\end{document}